\documentclass[11pt]{article}

\pagestyle{plain}

\usepackage{amsthm}
\usepackage{amsmath,amssymb}

\usepackage{algorithmic}

\bibliographystyle{abbrv}

\usepackage{graphicx,subfigure}

\usepackage{mathptmx}
\setlength{\parskip}{0cm}
\setlength{\parindent}{1em}




\usepackage[margin=1in]{geometry}

\usepackage{algorithm,algorithmic}

\usepackage{float}
\newfloat{algorithm}{t}{lop}

\usepackage{placeins}

\usepackage[linkcolor=black,colorlinks=true,citecolor=black,filecolor=black]{hyperref}

\newtheorem{theorem}{Theorem}[section]

\newtheorem{lemma}[theorem]{Lemma}
\newtheorem{corollary}[theorem]{Corollary}
\newtheorem{definition}[theorem]{Definition}
\newtheorem{observation}[theorem]{Observation}

\newcommand{\R}{\mathbb{R}}
\newcommand{\E}{\ensuremath{\mathbf{E}}}

\newcommand{\sat}{\mathrm{sat}}
\newcommand{\vbl}{\mathrm{vbl}}
\newcommand{\ol}[1]{\bar{#1}}

\newcommand{\epsa}{{\epsilon_1}}
\newcommand{\epsb}{{\epsilon_2}}
\newcommand{\epsc}{{\epsilon_3}}
\newcommand{\dela}{{\Delta_1}}
\newcommand{\delb}{{\Delta_2}}
\newcommand{\pa}{{p^{*}}}

\newcommand{\succppsz}{\textrm{E}_\ppsz}
\newcommand{\succguess}{\textrm{E}_{\textrm{guess}}}

\newcommand{\poly}{\operatorname{poly}}
\newcommand{\ppsz}{\textsc{PPSZ}}
\newcommand{\gic}{\textsc{GetInd2Clauses}}
\newcommand{\onecc}{\textsc{OneCC}}
\newcommand{\sparsealg}{\textsc{Sparse}}
\newcommand{\densealg}{\textsc{Dense}}
\newcommand{\ws}{\textsc{Wahlstroem}}
\newcommand{\ppszimp}{\textsc{PPSZImproved}}
\newcommand{\ocnf}{{$(\leq 3)$-CNF}}
\newcommand{\cuocnf}{{1C-Unique $(\leq 3)$-CNF}}
\renewcommand{\H}{\mathrm{H}}
\newcommand{\getsuar}{\gets_{\text{u.a.r.}}}

\title{Breaking the PPSZ Barrier for Unique 3-SAT}




\author{
Timon Hertli\thanks{
Institute for Theoretical Computer Science,
Department of Computer Science,
ETH Z\"urich, 8092 Z\"urich, Switzerland (\texttt{timon.hertli@inf.ethz.ch})
}}
\begin{document}
\maketitle

\date



\begin{abstract}
The PPSZ algorithm by Paturi, Pudl\'ak, Saks, and Zane (FOCS 1998) is the fastest known algorithm for (Promise) Unique $k$-SAT. We give an improved algorithm with exponentially faster bounds for Unique $3$-SAT.

For uniquely satisfiable 3-CNF formulas, we do the following case distinction: We call a clause critical if exactly one literal is satisfied by the unique satisfying assignment. If a formula has many critical clauses, we observe that PPSZ by itself is already faster. If there are only few clauses allover, we use an algorithm by Wahlstr\"om (ESA 2005) that is faster than PPSZ in this case. Otherwise we have a formula with few critical and many non-critical clauses. Non-critical clauses have at least two literals satisfied; we show how to exploit this to improve PPSZ.

\end{abstract}



\section{Introduction}
The well-known problem $k$-SAT is NP-complete for $k\geq 3$. If P$\not=$NP, $k$-SAT does not have a polynomial time algorithm. For a CNF formula $F$ over $n$ variables, the naive approach of trying all satisfying assignments takes time $O(2^n\cdot \poly(|F|))$.
 Especially for $k=3$ much work has been put into finding so-called ``moderately exponential time'' algorithms running in time $O(2^{cn})$ for some $c<1$. In 1998, Paturi, Pudl\'ak, Saks, and Zane presented a randomized algorithm for 3-SAT that runs in time $O(1.364^n)$. Given the promise that the formula has at most one satisfying assignment (that problem is called Unique 3-SAT), a running time of $O(1.308^n)$ was shown. Both bounds were the best known when published. The running time of general $3$-SAT has been improved repeatedly (e.g.~\cite{schoning1999,it04}), until PPSZ was shown to run in time $O(1.308^n)$ for general 3-SAT~\cite{hertli11}.

Any further improvement of 3-SAT further also improves Unique 3-SAT; however that bound has not been improved upon since publication of the PPSZ algorithm. In this paper, we present a randomized algorithm for Unique 3-SAT with exponentially better bounds than what could be shown for PPSZ. Our algorithm builds on PPSZ and improves it by treating sparse and dense formulas differently.

A key concept of the PPSZ analysis is the so-called critical clause: We call a clause critical for a variable $x$ if exactly one literal is satisfied by this unique satisfying assignment, and that literal is over $x$. It is not hard to see that the uniqueness of the satisfying assignment implies that every variable has at least one critical clause. If some variables have strictly \emph{more} than one critical clause, then we will give a straightforward proof that PPSZ by itself is faster already. Hence the bottleneck of PPSZ is when every variable has \emph{exactly} one critical clause, and in total there are exactly $n$ critical clauses.

Given a formula with exactly $n$ critical clauses, consider how many other (non-crtical) clauses there are. If there are few, we use an algorithm by Wahlstr\"om~\cite{wahlstroem05} that is faster than PPSZ for formulas with few clauses allover. If there are many non-critical clauses we use the following fact: A non-critical clause has two or more satisfied literals (w.r.t.\ unique satisfying assignment); so after removing a literal, the remaining 2-clause is still satisfied. We will exploit this to improve PPSZ.

An remaining problem is the case if only very few (i.e.\ sublinearly many) variables have more than one critical clause or appear in many (non-critical) clauses. In this case, we would get only a subexponential improvement. A significant part of our algorithm deals with this problem.

\subsection{Notation}
We use the notational framework introduced in \cite{welzl05}. 
Let $V$ be a finite set of propositional \emph{variables}. A \emph{literal} $u$ over $x\in V$ is a variable $x$ or a negated variable $\ol{x}$. If $u=\ol{x}$, then $\ol{u}$, the negation of $u$, is defined as $x$. We mostly use $x,y,z$ for variables and $u,v,w$ for literals.
We assume that all lite\-rals are distinct. A \emph{clause} over $V$ is a finite set of lite\-rals over pairwise distinct variables from $V$. By $\vbl(C)$ we denote the set of variables that occur in $C$, i.e.\ $\{x\in V\mid x\in C \vee \ol{x}\in C\}$. $C$ is a $k$-clause if $|C|=k$ and it is a $(\leq k)$-clause if $|C|\leq k$.
A formula in \emph{CNF} (Conjunctive Normal Form) $F$ over $V$ is a finite set of clauses over $V$. We define $\vbl(F):=\bigcup_{C\in F}\vbl(C)$. 
$F$ is a k-CNF formula (a $(\leq k)$-CNF formula) if all clauses of $F$ are $k$-clauses ($(\leq k)$-clauses).
A (truth) \emph{assignment} on $V$ is a function $\alpha : V \rightarrow
\{0,1\}$ which assigns a Boolean value to each variable. 
$\alpha$ extends to negated variables by letting $\alpha(\ol{x}):=1-\alpha(x)$.
A literal $u$ is \emph{satisfied by} $\alpha$ if $\alpha(u)=1$. A clause is \emph{satisfied by} $\alpha$ if it
contains a satisfied literal and a formula is \emph{satisfied by}
$\alpha$ if all of its clauses are. A formula is \emph{satisfiable} if
there exists a satisfying truth assignment to its variables. A formula that is not satisfiable is called \emph{unsatisfiable}.
Given a CNF formula $F$, we denote by $\sat(F)$ the set of assignments on $\vbl(F)$ that
satisfy $F$. $k$\emph{-SAT} is the decision problem of deciding if a $(\leq k)$-CNF formula has a satisfying assignment.

If $F$ is a CNF formula and $x \in \vbl(F)$, we write $F^{[x\mapsto 1]}$ (analogously $F^{[x\mapsto 0]}$) for the formula arising from removing all clauses containing $x$ and truncating all clauses containing $\ol{x}$ to their remaining literals. This corresponds to assigning $x$ to $1$ (or $0$) in $F$ and removing trivially satifsied clauses. We call assignments $\alpha$ on $V$ and $\beta$ and $W$ \emph{consistent} if $\alpha(x)=\beta(x)$ for all $x\in V\cap W$. If $\alpha$ is an assignment on $V$ and $W\subseteq V$, we denote by $\alpha|_W$ the assignment on $W$ with $\alpha|_W(x)=\alpha(x)$ for $x\in W$.
If $\gamma=\{x\mapsto 0,y\mapsto 1,\dots\}$, we write $F^{[\gamma]}$ as a shorthand for $F^{[x\mapsto 0][y\mapsto 1]\dots}$, the \emph{restriction} of F to $\gamma$.

%
%
%
For a set $W$, we denote by $x\getsuar W$ choosing an element $x$ u.a.r. (uniformly at random).
Unless otherwise stated, all random
choices are mutually independent.
We denote by $\log$ the logarithm to the base 2. For the logarithm to
the base $e$, we write $\ln$. 
By $\poly(n)$ we denote a polynomial factor depending on $n$.
We use the following convention if no confusion arises: When $F$ is a CNF formula, we denote by $V$ its variables and by $n$ the number of variables of $F$, i.e.\ $V:=\vbl(F)$ and $n:=|\vbl(F)|$. By $o(1)$ we denote a quantity dependent on $n$ going to $0$ with $n\to\infty$.
\subsection{Previous Work}
\begin{definition}
(Promise) Unique 3-SAT is the following promise problem: Given a {\ocnf} with at most one satisfying assignment, decide if it is satisfiable or unsatisfiable.

A randomized algorithm for Unique 3-SAT is an algorithm that, for a uniquely satisfiable {\ocnf} formula returns the satisfying assignment with probability $\frac{1}{2}$.
\end{definition}
Note that if the formula is not satisfiable, there is no satisfying assignment, and the algorithm cannot erroneously find one. Hence the error is one-sided and we don't have to care about unsatisfiable formulas. 

The PPSZ algorithm~\cite{ppsz} is a randomized algorithm for Unique 3-SAT running in time $O(1.308^n)$. The precise bound is as follows:
\begin{definition}
Let $S:=\int_{0}^{1}\left(1-\min\{1,\frac{r^2}{(1-r)^2}\}\right)dr=2\ln 2 - 1$.
\end{definition}
\begin{theorem}[\cite{ppsz}]
\label{thm.ppsz}
There exists a randomized algorithm (called $\ppsz$) for Unique 3-SAT running in time $2^{(S+o(1))n}$.
\end{theorem}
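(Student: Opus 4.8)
The plan is to prove Theorem~\ref{thm.ppsz} via the random-permutation-and-bounded-resolution scheme, analyzed through critical clause trees. First I would specify the algorithm $\ppsz$. Fix $D=D(n)$ with $D\to\infty$ and $D=o(n/\log n)$; then for a $(\leq 3)$-CNF one can test, for any literal, whether the corresponding unit clause follows from some $D$ of the clauses, in time $2^{o(n)}$. A \emph{round} of $\ppsz$ draws a permutation $\pi$ of $V$ uniformly at random and, independently, an assignment $\beta$ with $\beta(x)\getsuar\{0,1\}$ for each $x$; it processes the variables in $\pi$-order, and on reaching $x$ it checks whether the current restricted formula implies a unit clause on $x$ via $\le D$ of its clauses: if so it sets $x$ to the implied value (``$x$ is \emph{forced}''), otherwise $x:=\beta(x)$ (``$x$ is \emph{guessed}''); then it restricts by this assignment. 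If the final total assignment lies in $\sat(F)$, $\ppsz$ returns it; it runs $\lceil 2^{(S+o(1))n}\rceil$ rounds and reports ``unsatisfiable'' if none succeeds. A positive answer is always correct because an assignment is returned only after checking it satisfies $F$, and a round costs $2^{o(n)}$; so it suffices to show that a single round finds the unique satisfying assignment $\alpha$ with probability at least $2^{-(S+o(1))n}$.

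Next I would reduce this probability to a counting problem. Fixing $\pi$ alone, call $x$ \emph{forced under $\pi$} if, once every $\pi$-earlier variable has been set to its $\alpha$-value, the restricted formula implies a unit clause on $x$ via $\le D$ clauses, and let $G(\pi)$ be the set of variables not forced under $\pi$. A short induction along $\pi$ shows that whenever $\beta$ agrees with $\alpha$ on $G(\pi)$, the round outputs $\alpha$: while the partial assignment equals $\alpha$ on the processed variables the restricted formula is satisfied by $\alpha$, so every unit clause it implies is satisfied by $\alpha$ and forced variables are set correctly, while each non-forced variable lies in $G(\pi)$ and is set correctly by assumption. Hence, with $p$ the probability that a round outputs $\alpha$,
\[ p \;=\; \E_\pi\big[\Pr\nolimits_\beta[\text{round outputs }\alpha]\big] \;\ge\; \E_\pi\big[2^{-|G(\pi)|}\big] \;\ge\; 2^{-\E_\pi[|G(\pi)|]} \]
by convexity of $t\mapsto 2^{-t}$, and $\E_\pi[|G(\pi)|]=\sum_{x\in V}\Pr_\pi[x\text{ not forced under }\pi]$ by linearity. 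So it remains to prove $\Pr_\pi[x\text{ not forced under }\pi]\le S+o(1)$ for every $x$.

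This bound is the heart of the proof. By uniqueness of $\alpha$, every variable has at least one critical clause, so I can grow a \emph{critical clause tree} rooted at $x$: at a node labeled $y$, pick a critical clause for $y$ and make its $\le 2$ other variables the children, growing breadth-first until the tree has $D$ nodes or every unfinished branch is blocked by a variable already on its root path. The decisive lemma is that $x$ is forced under $\pi$ whenever the tree admits an \emph{early cut} --- a set of non-root nodes meeting every root-to-leaf path and consisting solely of variables preceding $x$ in $\pi$ --- because once those variables are set, resolving the $\le D$ tree clauses yields the unit clause on $x$; equivalently, $x$ is not forced only if some root-to-leaf path has all its non-root variables after $x$. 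I would then pass to i.i.d.\ uniform timestamps and condition on $x$'s timestamp $t$, so each non-root node is independently ``late'' (timestamp $>t$) with probability $1-t$, and show by a recursion over the tree that, up to an additive $o(1)$ (absorbing the truncation to $D$ nodes and branches blocked by repeated variables), the probability of an all-late root-to-leaf path is at most $q(2-q)$, where $q=q(t)$ is the relevant root of $q=(1-t)\,q(2-q)$ --- namely $q=\tfrac{1-2t}{1-t}$ for $t<\tfrac12$ and $q=0$ for $t\ge\tfrac12$ --- which is the value attained by the balanced binary tree as its depth, hence $D$, tends to infinity. Since $q(2-q)=1-\min\{1,\tfrac{t^2}{(1-t)^2}\}$, this gives $\Pr_\pi[x\text{ not forced}\mid t]\le 1-\min\{1,\tfrac{t^2}{(1-t)^2}\}+o(1)$. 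Integrating over $t\in[0,1]$ yields $\Pr_\pi[x\text{ not forced}]\le S+o(1)$, hence $\E_\pi[|G(\pi)|]\le(S+o(1))n$ and $p\ge 2^{-(S+o(1))n}$; then $\lceil 2^{(S+o(1))n}\rceil$ independent rounds succeed with probability at least $1-e^{-1}>\tfrac12$, giving the claimed running time.

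I expect the main obstacle to be this last lemma, and within it the robustness of the critical clause tree: verifying that an early cut is genuinely witnessed by a consequence of at most $D$ clauses, and --- the truly delicate point --- handling the cases where the tree cannot attain the clean balanced binary shape because critical clauses are scarce or their variables overlap, so that a short blocked branch (which by itself would make an all-late path more likely) still costs only $o(1)$ and the probability of the bad event exceeds the balanced-binary-tree value by at most $o(1)$ rather than by a constant.
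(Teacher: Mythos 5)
Your proposal correctly reconstructs the original PPSZ argument — the convexity/Jensen reduction to the expected number of guessed variables, the critical-clause-tree cut lemma giving a forcing probability of $\min\{1,\tfrac{t^2}{(1-t)^2}\}-o(1)$ at timestamp $t$, the integration to $S+o(1)$, and the standard repetition step — which is exactly the chain of lemmas the paper invokes (Lemma~\ref{lem.ppsz}, Theorem~\ref{thm.ppszbound}, Corollary~\ref{cor.expguessed}), with the critical-clause-tree core delegated to \cite{ppsz}. In short, you take essentially the same route as the paper, merely writing out in more detail the part the paper cites from \cite{ppsz}.
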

Note that $0.3862<S<0.3863$ and $2^{S}<1.308$.
\subsection{Our Contribution}
For Unqiue 3-SAT, we get time bounds exponentially better than PPSZ:
\begin{theorem}
\label{thm.main}
There exists a randomized algorithm for Unique 3-SAT running in time $2^{(S-\epsb+o(1))n}$ where $\epsb=10^{-24}$.
\end{theorem}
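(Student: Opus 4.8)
The plan is to run a small portfolio of algorithms in parallel and return a satisfying assignment as soon as one of them finds it; correctness is immediate since the error is one-sided, and the running time is the maximum of the individual running times (after boosting the success probability to $\tfrac12$ by $\poly(n)$-fold independent repetition). The whole content is therefore to show that for every uniquely satisfiable \ocnf\ $F$, at least one algorithm in the portfolio finds the unique satisfying assignment $\alpha$ with probability $2^{-(S-\epsb+o(1))n}$. I would organise the portfolio around a case distinction on the (unknown) structure of $\alpha$, governed by two constants $\dela,\delb>0$ and a constant $C$ fixed in advance, with $\epsb$ chosen at the very end to lie below every saving obtained along the way.

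Recall the mechanism behind Theorem~\ref{thm.ppsz}: each variable gets an independent uniform placement in $[0,1]$ and an independent random bit; processing in placement order, a variable is \emph{forced} when bounded-depth resolution plus the already-assigned variables determines its value, otherwise it is guessed, and $\ppsz$ succeeds with probability $\E\!\left[2^{-(\#\text{non-forced})}\right]\ge 2^{-\E[\#\text{non-forced}]}$. The bound $\E[\#\text{non-forced}]\le(S+o(1))n$ is driven entirely by critical clauses and is tight only when every variable has exactly one critical clause, arranged in the ``generic'' branching pattern. The three cases are then: (A) at least $\dela n$ variables have two or more critical clauses; (B) fewer than $\dela n$ such variables and $|F|\le Cn$; (C) fewer than $\dela n$ such variables and $|F|>Cn$. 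In case (A) I would show plain $\ppsz$ is already fast: a variable with a second critical clause has a strictly bushier critical-clause tree, lowering its non-forcing probability at every placement $r<1/2$ by a constant, so $\dela n$ such variables pull $\E[\#\text{non-forced}]$ down by $\Omega(\dela)\,n$, giving time $2^{(S-\epsilon_A+o(1))n}$. In case (B) I would invoke Wahlstr\"om's algorithm \cite{wahlstroem05} for formulae of linear length, whose running time on \ocnf\ is $2^{\mu|F|+o(n)}$ for a constant $\mu$; choosing $C$ with $\mu C<S$ yields time $2^{(S-\epsilon_B+o(1))n}$ (since the $n$ critical clauses alone cost $n$ clauses, ``$|F|\le Cn$'' means ``few non-critical clauses'', which is exactly Wahlstr\"om's favourable regime).

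Case (C) is where the new idea is used: few variables with a second critical clause but many clauses, hence many non-critical clauses. As noted in the introduction, a non-critical clause has at least two $\alpha$-satisfied literals, so deleting any single literal leaves an $\alpha$-satisfied 2-clause, and a 2-clause makes the variables it touches much more likely to be forced by PPSZ (it appears as an extra child in their critical-clause trees). I would have a subroutine \gic\ extract a linear-size family of pairwise variable-disjoint derived 2-clauses: either some variable has large degree, and stripping it from all its clauses produces many independent 2-clauses (the few that happen to be its own critical clause being guessed away at polynomial cost), or all degrees are bounded, so the clause hypergraph has a linear matching of non-critical clauses from which any literal may be removed; then feed the resulting \cuocnf\ instance to an improved-PPSZ routine (\densealg/\onecc), and argue the extra 2-clauses lower $\E[\#\text{non-forced}]$ by $\Omega(1)\,n$ (constant depending on $C,\dela$), which dominates the polynomial overhead.

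The main obstacle is the \emph{boundary} between the cases. Thresholds at $\dela n$ and $Cn$ are harmless when a case holds with room to spare, but a formula sitting essentially on the boundary — only $o(n)$ variables with a second critical clause or with pathologically high degree, and $|F|$ near $Cn$ — is guaranteed only a subexponential improvement from each individual mechanism. I would handle this inside $\ppszimp$ by a preprocessing/branching phase: branch (assign both ways) on the $o(n)$ ``special'' variables — those with surplus critical clauses or abnormally high degree — at cost $2^{o(n)}$, so that the residual formula lies in one genuinely clean case (in particular $\#$critical$\le(1+\dela B)n$ for a constant $B$, so $|F|>Cn$ really does force $\Omega(n)$ non-critical clauses); and argue that the PPSZ analysis is robust enough that this sublinear surgery perturbs $S$ only by $o(1)$. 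Controlling how assigning/removing $o(n)$ variables affects uniqueness of the satisfying assignment and the critical/non-critical clause counts, and getting the residual instance to land cleanly in case (B) or (C), is what I expect to be the technically heaviest part — consistent with the remark that ``a significant part of our algorithm deals with this problem''. Finally I would collect the per-case savings $\epsilon_A,\epsilon_B,\epsilon_C$, set $\epsb=10^{-24}$ below their minimum, and absorb all lower-order contributions into the $o(1)$.
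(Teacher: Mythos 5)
Your high-level decomposition matches the paper's: (A) many variables with $\ge 2$ critical clauses $\Rightarrow$ plain $\ppsz$ is already faster; otherwise reduce to the one-critical-clause case and split into a ``few clauses $\Rightarrow$ Wahlstr\"om'' branch and a ``many clauses $\Rightarrow$ extract satisfied 2-clauses and bias the guessing'' branch. But several load-bearing steps are missing or wrong.

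\textbf{(1) Extracting the 2-clauses.} Your case (C) proposes, when degrees are bounded, to ``find a linear matching of non-critical clauses from which any literal may be removed.'' This is not implementable: whether a clause is non-critical is a property of the unknown $\alpha$ and cannot be read off $F$. The paper's key trick is different and sidesteps this entirely. It defines $\Delta_2$-density so that even after deleting the clauses through any $\Delta_2 n$ variables, some variable $x$ still has 3-clause degree $\ge 5$. Since in a {\cuocnf} $x$ has \emph{at most one} critical clause, a uniformly random clause containing $x$ is non-critical for $x$ with probability $\ge 4/5$, and deleting $x$'s literal from it yields a 2-clause satisfied by $\alpha$ with probability $\ge 4/5$. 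Iterating, \gic\ produces $\Theta(n)$ variable-disjoint 2-clauses each satisfied independently with probability $\ge 4/5$; this ``$4/5$'' beating the trivial ``$3/4$'' is what makes the biased guessing ($\tfrac{1}{5}$ on the violating assignment, $\tfrac{4}{15}$ on each satisfying one) save a constant number of bits per 2-clause. Your proposal does not arrive at this mechanism and, as written, the step you do propose would fail.

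\textbf{(2) The sparse/dense boundary.} Your split on $|F|\lessgtr Cn$ is coarser than the paper's $\Delta_2$-sparse/$\Delta_2$-dense (which is about 3-clause degree after removing $\le\Delta_2 n$ variables). The degree-based criterion is what lets \gic\ run for $\Theta(n)$ rounds while staying independent, and, on the sparse side, what survives the brute-force restriction. With your criterion you would still face the issue the paper handles explicitly in \sparsealg: after restricting on the brute-forced variables, the residual formula $F'$ may contain an arbitrary number of $(\leq 2)$-clauses, so Wahlstr\"om's average-degree bound does not apply directly. The paper deals with this via a three-way sub-argument: if there are many critical $(\leq 2)$-clauses, $\ppsz$ already saves (Lemma~\ref{lem.morethanone}, second part); if there are few $(\leq 2)$-clauses overall, Wahlstr\"om applies; otherwise most $(\leq 2)$-clauses are non-critical, hence 2-clauses with \emph{both} literals satisfied, and setting both literals of a uniformly random $(\leq 2)$-clause to $1$ succeeds with probability $\ge 2/3$, which per two variables beats the $\ppsz$ rate. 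Your write-up has none of this.

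\textbf{(3) The boundary handling.} You say ``branch on the $o(n)$ `special' variables at cost $2^{o(n)}$.'' Two problems: you cannot identify the variables with $\ge 2$ critical clauses without $\alpha$, so you must brute-force over \emph{all} $\lfloor\Delta_1 n\rfloor$-subsets and all assignments on them; and the bad range is $0<c(F)<\Delta_1 n$, which is up to a \emph{linear} number, not $o(n)$. The actual cost is $2^{(\Delta_1+\H(\Delta_1))n}$, and the argument (Lemma~\ref{lem.red}) is to choose $\Delta_1$ small enough that $\Delta_1+\H(\Delta_1)<\epsilon_1$, the saving of \onecc. A similar brute-force over $\Delta_2 n$-subsets is needed on the sparse side, for the same reason.

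So the architecture is right, but the heart of the dense case (the $4/5$-satisfied 2-clauses via high-degree variables and the $\Delta_2$-density notion that guarantees them) and the three-way sub-analysis of the sparse case are absent, and the boundary cost is mis-estimated.
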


In Section~\ref{sec.ppsz}, we review the PPSZ algorithm. In Section~\ref{sec.1cc}, we show that the worst case for PPSZ occurs when every variable has exactly one critical 3-clause; this case we improve in Section~\ref{sec.i2c}. In Section~\ref{sec.con}, we pose open problems that arise.
\section{The PPSZ Algorithm}
\label{sec.ppsz}
In this section we review the PPSZ algorithm~\cite{ppsz}, summarized in Algorithm~\ref{alg.ppsz}. We need to adapt some statements slightly. For the straightforward but technical proofs we refer the reader to the appendix. The following two definitions are used to state the PPSZ algorithm.

\begin{definition}
A CNF formula $F$ \emph{$D$-implies} a literal $u$ if there exists a subformula $G\subseteq F$ with $|G|\leq D$ and all satisfying assignments of $G$ set $u$ to $1$.
\end{definition}

In a random permutation, the positions of two elements are not independent. To overcome this, placements were defined. They can be seen as continuous permutations with the nice property that the places of different elements are independent.
\begin{definition}[\cite{ppsz}]
A placement on $V$ is a mapping $V\to[0,1]$. A random placement is obtained by choosing for every $x\in V$ $\pi(x)$ uniformly at random from $[0,1]$, independently.
\end{definition}
\begin{observation}
By symmetry and as ties happen with probability $0$, ordering $V$ according to a random placement gives a permutation distributed the same as a permutation drawn uniformly at random from the set of all permutations on $V$.
\end{observation}

\begin{algorithm}
\caption{$\ppsz($CNF formula $F)$}
\label{alg.ppsz}
\begin{algorithmic}
  \STATE $V\gets \vbl(F)$; $n\gets |V|$
  \STATE Choose $\beta$ u.a.r.\ from all assignments on $V$ 
  \STATE Choose $\pi: V\to [0,1]$ as a random placement of $V$ 
  \STATE Let $\alpha$ be a partial assignment on $V$, initially empty
  \FOR {$x\in V$, in ascending order of $\pi(x)$}
  \STATE \textbf{if} $F$ $(\log n)$-implies $x$ or $\bar{x}$, set $\alpha(x)$ to satisfy this literal
  \STATE \textbf{otherwise} $\alpha(x)\gets\beta(x)$
  \COMMENT {guess $\alpha(x)$ u.a.r.}
   \STATE $F\gets F^{[x\mapsto \alpha(x)]}$
  \ENDFOR
  \RETURN $\alpha$
\end{algorithmic}
\end{algorithm}

The analysis of PPSZ builds on the concept of forced and guessed variables:

\begin{definition}
If in $\ppsz$, $\alpha(x)$ is assigned $0$ or $1$ because of $D$-implication, we call $x$ \emph{forced}. Otherwise (if $\alpha(x)$ is set to $\beta(x)$), we call $x$ \emph{guessed}.
\end{definition}


The following lemma from~\cite{ppsz} relates the expected number of guessed variables to the success probability (the proof is by an induction argument and Jensen's inequality).

\begin{lemma}[\cite{ppsz}]
\label{lem.ppsz}
Let $F$ be a satisfiable {\ocnf}, let $\alpha^*$ be a satisfying assignment. Let $G(\pi)$ be the expected number of guessed variables conditioned on $\beta=\alpha^*$ depending on $\pi$. Then $\ppsz(F)$ returns $\alpha^*$ with probability at least $\E_{\pi}[2^{-G(\pi)}]\geq 2^{\E_{\pi}[-G(\pi)]}$.
\end{lemma}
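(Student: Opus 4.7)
The plan is to fix the placement $\pi$, lower-bound the conditional success probability $\Pr[\ppsz(F) = \alpha^* \mid \pi]$ by $2^{-G(\pi)}$, and then average over $\pi$, finishing with Jensen's inequality. The crucial structural observation is that forcing is always correct with respect to $\alpha^*$: if the currently restricted formula $F'$ $D$-implies a literal $u$, then since $\alpha^*$ (restricted to the still-unassigned variables) satisfies every subformula of $F'$ and by definition every satisfying assignment of the witness subformula agrees on $u$, we must have $\alpha^*(u)=1$.

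First I would formalize $G(\pi)$ by running the algorithm deterministically with $\beta=\alpha^*$; the observation above, together with a straightforward induction along the $\pi$-order, shows that the running partial assignment $\alpha$ agrees with $\alpha^*$ on every variable processed so far, so the forced/guessed classification is well-defined and the run returns $\alpha^*$ with certainty. Let $T(\pi)$ denote the resulting set of guessed variables, so $G(\pi)=|T(\pi)|$. Next I would couple this reference run with a run on a generic $\beta$: by a parallel induction along the $\pi$-order, whenever $\beta$ agrees with $\alpha^*$ on all variables of $T(\pi)$ processed so far, the two runs maintain identical restricted formulas, identical forced/guessed classifications at the next step, and identical partial assignments. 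The inductive step is immediate, because a forced variable receives $\alpha^*(x)$ in both runs, while a guessed variable $x\in T(\pi)$ receives $\beta(x)=\alpha^*(x)$ by the coupling hypothesis.

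Consequently, the generic run returns $\alpha^*$ whenever $\beta$ agrees with $\alpha^*$ on the $G(\pi)$ coordinates in $T(\pi)$; under uniform $\beta$, this event has probability exactly $2^{-G(\pi)}$, giving $\Pr[\ppsz(F)=\alpha^*\mid\pi]\geq 2^{-G(\pi)}$. Taking expectation over $\pi$ and applying Jensen's inequality to the convex function $t\mapsto 2^{-t}$ yields
\[
\Pr[\ppsz(F)=\alpha^*] \;\geq\; \E_\pi[2^{-G(\pi)}] \;\geq\; 2^{-\E_\pi[G(\pi)]} \;=\; 2^{\E_\pi[-G(\pi)]}.
\]
The main subtlety is the coupling induction: one must carefully verify that the guessed-variable set $T(\pi)$, defined via the reference run, is indeed the right set to condition on for the generic run. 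This reduces to the fact that $D$-implications are a function only of the current restricted formula, which stays synchronized across the two runs as long as all previously processed variables are assigned identically — an invariant that the induction maintains.
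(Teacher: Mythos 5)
Your proof is correct and matches the approach the paper attributes to \cite{ppsz}: the paper does not spell the argument out (it only notes it is ``by an induction argument and Jensen's inequality''), and your coupling induction along the $\pi$-order, showing that forcing is always $\alpha^*$-consistent and that agreement of $\beta$ with $\alpha^*$ on the reference guessed set $T(\pi)$ forces the two runs to track each other, is exactly that induction, with Jensen applied to $t\mapsto 2^{-t}$ at the end. One small point worth noting explicitly: since $F$ is only assumed satisfiable (not uniquely so), your observation that $D$-implication can never contradict $\alpha^*$ — because $\alpha^*$ restricted to the remaining variables satisfies every subformula of the current restricted formula — is precisely the invariant that keeps the argument valid without uniqueness, and you maintain it correctly through the induction.
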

Remember that $S:=\int_{0}^{1}\left(1-\min\{1,\frac{r^2}{(1-r)^2}\}\right)dr=2\ln 2 - 1$, which corresponds to the probability that a variable is guessed. We define $S_p$ where the integral starts from $p$ instead of $0$; this corresponds to the probability that a variable has place at least $p$ and is guessed.
\begin{definition}
Let $S_p:=\int_{p}^{1}\left(1-\min\{1,\frac{r^2}{(1-r)^2}\}\right)dr$.
\end{definition}
\begin{observation}
\label{obs.sp}
For $p\leq \frac{1}{2}$, $S_p=S-p+\int_{0}^{p}\frac{r^2}{(1-r)^2}dr$.
\end{observation}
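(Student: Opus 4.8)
The statement to prove is Observation~\ref{obs.sp}: for $p\le\frac12$, we have $S_p=S-p+\int_0^p\frac{r^2}{(1-r)^2}\,dr$.

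The plan is to start from the definition $S_p=\int_p^1\bigl(1-\min\{1,\frac{r^2}{(1-r)^2}\}\bigr)\,dr$ and split the integral at $\frac12$. The first step is to observe that for $r\le\frac12$ we have $r\le 1-r$, so $\frac{r^2}{(1-r)^2}\le 1$ and hence $\min\{1,\frac{r^2}{(1-r)^2}\}=\frac{r^2}{(1-r)^2}$; conversely for $r\ge\frac12$ the minimum equals $1$, making the integrand vanish on $[\frac12,1]$. Therefore, since $p\le\frac12$,
\begin{equation*}
S_p=\int_p^{1/2}\left(1-\frac{r^2}{(1-r)^2}\right)dr.
\end{equation*}

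Next I would relate this to $S$ by the same case analysis applied to $S=\int_0^1(1-\min\{1,\frac{r^2}{(1-r)^2}\})\,dr=\int_0^{1/2}\left(1-\frac{r^2}{(1-r)^2}\right)dr$, again using that the integrand is $0$ on $[\frac12,1]$. Subtracting, $S-S_p=\int_0^p\left(1-\frac{r^2}{(1-r)^2}\right)dr=p-\int_0^p\frac{r^2}{(1-r)^2}\,dr$, which rearranges to exactly $S_p=S-p+\int_0^p\frac{r^2}{(1-r)^2}\,dr$, as claimed.

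There is essentially no obstacle here; the only things to be careful about are the direction of the inequality $r^2\le(1-r)^2$ on $[0,\frac12]$ (it follows from $0\le r\le 1-r$, both nonnegative) and the harmless fact that the behaviour at the single point $r=\frac12$ does not affect the integrals. Everything else is elementary splitting and rearranging of Riemann integrals, so the argument is a few lines.
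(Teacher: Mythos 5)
Your proof is correct, and it matches the straightforward computation one would expect (the paper states this as an unproved observation). One small remark: the detour through splitting both $S$ and $S_p$ at $\tfrac12$ is unnecessary — you could simply write $S - S_p = \int_0^p\bigl(1-\min\{1,\tfrac{r^2}{(1-r)^2}\}\bigr)\,dr$ directly from the definitions, note that on $[0,p]\subseteq[0,\tfrac12]$ the minimum is $\tfrac{r^2}{(1-r)^2}$, and rearrange — but the extra step is harmless and the argument is sound.
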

In the appendix, we derive from~\cite{ppsz} the following:
\begingroup
\def\thetheorem{\ref{cor.expguessed}}
\begin{corollary}
\sloppypar{
Let $F$ a {\ocnf} with unique satisfying assignment $\alpha$. Then in PPSZ($F$) conditioned on $\beta=\alpha$, the expected number of guessed variables is at most $(S+o(1))n$.

Furthermore, suppose we pick every variable of $F$ with probability $p$, independently, and let $V_p$ be the resulting set. Then in PPSZ($F$) conditioned on $\beta=\alpha$, the expected number of guessed variables is at most $(S_p+o(1))n$.
}
\end{corollary}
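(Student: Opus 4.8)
The plan is to reduce both parts of Corollary~\ref{cor.expguessed} to a single pointwise estimate in the placement that is already hidden inside the PPSZ analysis, and then to split one integral.

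The only fact I would extract from~\cite{ppsz} --- it is the core estimate in their proof of Theorem~\ref{thm.ppsz}, just re-read so as to expose an intermediate quantity --- is the following \emph{uniform per-variable bound}: for a uniquely satisfiable {\ocnf} $F$ with unique satisfying assignment $\alpha$, running $\ppsz(F)$ conditioned on $\beta=\alpha$, we have for every $x\in V$ and every $r\in[0,1]$
\[
\Pr[\,x\text{ is guessed}\mid \pi(x)=r\,]\ \le\ 1-\min\Big\{1,\tfrac{r^{2}}{(1-r)^{2}}\Big\}+o(1),
\]
with the $o(1)$ uniform in $x$ and $r$. The mechanism is the critical clause tree: uniqueness of $\alpha$ gives every variable a critical clause; one grows the associated tree to depth $\approx\log n$, and $x$ becomes $(\log n)$-implied, hence forced, as soon as the tree has a cut all of whose variables precede $x$ under $\pi$. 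The complementary event ``$x$ guessed'' is therefore dominated, given $\pi(x)=r$, by the survival probability of the depth-$(\log n)$ branching process attached to that tree, whose infinite-depth limit for $3$-clauses (two children per node) is exactly $\min\{1,r^{2}/(1-r)^{2}\}$; the depth truncation, the gap between $D$-implication and $(\log n)$-implication, and the presence of $(\le 3)$- rather than exactly $3$-clauses contribute only to the $o(1)$. By the very definition of $S$, the right-hand side above integrates over $r\in[0,1]$ to $S+o(1)$.

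The first statement is then linearity of expectation, followed by integrating out the (uniform) placement of each variable:
\[
\E_{\pi}[\#\text{guessed}]=\sum_{x\in V}\int_{0}^{1}\Pr[\,x\text{ guessed}\mid\pi(x)=r\,]\,dr\ \le\ \sum_{x\in V}\Big(\int_{0}^{1}\big(1-\min\{1,\tfrac{r^{2}}{(1-r)^{2}}\}\big)\,dr+o(1)\Big)=(S+o(1))n .
\]
For the second statement I would realise the random set $V_p$ by coupling it to the placement $\ppsz$ already draws: set $V_p:=\{x\in V:\pi(x)<p\}$. Since the $\pi(x)$ are independent and uniform on $[0,1]$, this set contains each variable independently with probability $p$, exactly the required distribution. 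Now $x\notin V_p$ forces $\pi(x)\ge p$, so, using the per-variable bound and the definition $S_p=\int_p^1(1-\min\{1,r^2/(1-r)^2\})\,dr$,
\[
\Pr[\,x\notin V_p\ \text{and}\ x\text{ guessed}\,]=\int_{p}^{1}\Pr[\,x\text{ guessed}\mid\pi(x)=r\,]\,dr\ \le\ S_p+o(1),
\]
and summing over $x\in V$ bounds the expected number of guessed variables lying outside $V_p$ by $(S_p+o(1))n$. Conditioned on $\beta=\alpha$ every variable is ultimately assigned $\alpha(x)$, so these are precisely the variables that remain \emph{guessed} once the variables of $V_p$ are revealed to their correct values --- the form in which the corollary gets used later.

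The only genuine work is the first step: pulling the uniform bound on $\Pr[x\text{ guessed}\mid\pi(x)=r]$ out of~\cite{ppsz}, where it is present only implicitly --- they state the conclusion as a lower bound on the success probability, obtained from $\E_{\pi}[2^{-G(\pi)}]\ge 2^{-\E_{\pi}[G(\pi)]}$ as in Lemma~\ref{lem.ppsz}, rather than as a bound on $\E_{\pi}[G(\pi)]$ itself. One has to verify that the critical clause tree can be grown wide enough for \emph{every} variable up to the $o(1)$ slack, and that nothing is spoiled by allowing $(\le 3)$-clauses or by replacing $D$-implication with $(\log n)$-implication for a slowly growing $D$; none of this affects the constants. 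Once that estimate is in hand, the two displayed computations and the coupling $V_p=\{x:\pi(x)<p\}$ are immediate, which is why the derivation is deferred to the appendix.
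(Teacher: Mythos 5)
Your proposal is correct and follows essentially the same route as the paper: both rest on extracting from~\cite{ppsz} the per-variable, per-place bound $\Pr[x\text{ guessed}\mid\pi(x)=r]\le 1-\min\{1,r^2/(1-r)^2\}+o(1)$ (the paper's Theorem~\ref{thm.ppszbound}), then integrating over $r\in[0,1]$ for the first claim and over $r\in[p,1]$ for the second via the coupling $V_p=\{x:\pi(x)<p\}$. The only cosmetic difference is how the $o(1)$ error is integrated away: the paper invokes dominated convergence using monotonicity of the lower bound $b_n$, whereas you assert uniformity of the $o(1)$ in $r$ directly — both are valid (uniformity can be justified by a Pólya-type argument from monotonicity of $b_n$ and continuity of the limit), and they yield the same conclusion.
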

\addtocounter{theorem}{-1}
\endgroup
By Lemma~\ref{lem.ppsz}, we have the following corollary:
\begin{corollary}
\label{cor.sp}
Let $F$ a {\ocnf} with unique satisfying assignment $\alpha$. Then the probability that $\ppsz(F)$ returns $\alpha$ is at least $2^{(-S-o(1))n}$.

Furthermore, suppose we pick every variable of $F$ with probability $p$, independently, and let $V_p$ be the resulting set. Then the expected $\log$ of the probability (over the choice of $V_p$) that $\ppsz(F^{[\alpha|_{V_p}]})$ returns $\alpha|_{V\setminus V_p}$ is at least
$(-S_p-o(1))n$.
\end{corollary}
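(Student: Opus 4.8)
The plan is to derive Corollary~\ref{cor.sp} directly from Corollary~\ref{cor.expguessed} by feeding its bound on the expected number of guessed variables into Lemma~\ref{lem.ppsz}. The first statement is the cleaner one: by Corollary~\ref{cor.expguessed}, conditioned on $\beta=\alpha$, the expected number of guessed variables is $G := \E_\pi[G(\pi)] \leq (S+o(1))n$. Lemma~\ref{lem.ppsz} says $\ppsz(F)$ returns $\alpha$ with probability at least $2^{\E_\pi[-G(\pi)]} = 2^{-G} \geq 2^{(-S-o(1))n}$, which is exactly what we want. So the first paragraph of the proof is essentially a two-line chaining of the two cited results; the only thing to check is that ``conditioned on $\beta=\alpha$'' is harmless, since $\beta$ is chosen u.a.r.\ and we are only lower-bounding the success probability.

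For the second statement I would first observe that $\ppsz(F^{[\alpha|_{V_p}]})$ is, by the structure of Algorithm~\ref{alg.ppsz}, distributed exactly like running $\ppsz$ on $F$ but with the variables in $V_p$ pre-assigned according to $\alpha$ (placements in $[0,1]$ and independent guesses mean restricting is consistent with ``peeling off'' a random subset first). Hence, conditioned on a fixed choice of $V_p$ and on $\beta=\alpha$, the number of guessed variables in the run on $F^{[\alpha|_{V_p}]}$ is the number of variables in $V\setminus V_p$ that $\ppsz$ guesses; call its expectation (over $\pi$) $G_{V_p}$. Applying Lemma~\ref{lem.ppsz} to the formula $F^{[\alpha|_{V_p}]}$ (which is still a uniquely satisfiable $(\leq 3)$-CNF, with unique satisfying assignment $\alpha|_{V\setminus V_p}$) gives that the success probability is at least $2^{-G_{V_p}}$, so the $\log$ of the success probability is at least $-G_{V_p}$. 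Taking expectation over the choice of $V_p$, the expected $\log$ of the success probability is at least $-\E_{V_p}[G_{V_p}]$, and Corollary~\ref{cor.expguessed} bounds $\E_{V_p}[G_{V_p}] \leq (S_p+o(1))n$, which finishes it.

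The one genuinely delicate point — the part I expect to need the most care — is the claim that restricting $F$ by $\alpha|_{V_p}$ and then running $\ppsz$ produces the same distribution on ``guessed variables among $V\setminus V_p$'' as running $\ppsz$ on $F$ directly and looking at how it treats those variables. This is intuitively true because a literal $u$ over $x\in V\setminus V_p$ is $(\log n)$-implied in the restricted formula whenever it was $(\log n)$-implied in $F$ under the partial assignment built so far (the restriction only removes clauses and shortens others in ways consistent with $\alpha$), and the random placement of $V\setminus V_p$ and the random guesses $\beta$ restricted to $V\setminus V_p$ have the same joint distribution as in the unrestricted run. Still, Corollary~\ref{cor.expguessed} already asserts the bound $(S_p+o(1))n$ for ``guessed variables'' in exactly this subsampled setting, so provided we read that corollary as referring to the guessed variables of the restricted execution, the reduction is immediate and this paragraph is short. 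Everything else is a direct invocation of Lemma~\ref{lem.ppsz} and the inequality $\E[2^{-X}]\geq 2^{\E[-X]}$, which is Jensen's inequality and is already folded into the statement of Lemma~\ref{lem.ppsz}.
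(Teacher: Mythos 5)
Your proof is correct and follows essentially the same route the paper intends: both parts are obtained by feeding the expected-guessed-variables bounds of Corollary~\ref{cor.expguessed} into the Jensen-type bound of Lemma~\ref{lem.ppsz}, with the second part additionally taking an expectation over $V_p$ outside the per-$V_p$ application of the lemma. The ``delicate point'' you flag (that the subsampled bound in Corollary~\ref{cor.expguessed} is to be read as applying to the run on $F^{[\alpha|_{V_p}]}$) is indeed exactly how the paper uses it, so no gap remains.
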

The first statement is actually what is shown in~\cite{ppsz}, and the second statement is a direct consequence. We need this later when we replace PPSZ by a different algorithm on variables with place at most $p$. It is easily seen that for a $(\leq 3)$-CNF $F$, $\ppsz(F)$ runs in time $2^{o(n)}$. Hence by a standard repetition argument, $\ppsz$ gives us an algorithm finding an assignment in time $2^{(S+o(1))n}$ and we (re-)proved Theorem~\ref{thm.ppsz}.

\section{Reducing to One Critical Clause per Variable}
\label{sec.1cc}
In this section we show that an exponential improvement for the case where every variable has exactly one critical clause gives an exponential improvement for Unique 3-SAT.
%
%
\begin{definition}[\cite{ppsz}]
Let $F$ be a CNF formula satisfied by $\alpha$.
We call a clause $C$ \emph{critical} for $x$ (w.r.t.\ $\alpha$) if $\alpha$ satisfies exactly one literal of $C$, and this literal is over $x$.
\end{definition}
\begin{definition}
A {\cuocnf} is a uniquely satisfiable {\ocnf} where every variable has at most one critical clause. Call the corresponding promise problem 1C-Unique 3-SAT.
\end{definition}

All formulas we consider have a unique satisfying assignment; critical clauses will be always w.r.t.\ that. First we show that a variables with more than one critical clause are guessed less often; giving an exponential improvement for formulas with a linear number of such variables. A similar statement for shorter critical clauses is required in the next section.

\begin{lemma}
\label{lem.morethanone}
Let $F$ be a {\ocnf} uniquely satisfied by $\alpha$. A variable $x$ with at least two critical clauses (w.r.t.\ $\alpha$) is guessed given $\beta=\alpha$ with probability at most $S - 0.0014 + o(1)$. Furthermore, a variable $x$ with a critical $(\leq 2)$-clause is guessed with probability at most $S-0.035 + o(1)$
\end{lemma}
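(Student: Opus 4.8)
The plan is to estimate, for a variable $x$ with extra structure (two critical clauses, or a short critical clause), the probability that $x$ is guessed in $\ppsz$ given $\beta=\alpha$, and show it drops below $S$ by a constant. Recall the PPSZ analysis: conditioned on $\pi(x)=r$, the variable $x$ is guessed only if it is \emph{not} $(\log n)$-implied at the time it is processed; a sufficient condition for it to be forced is that some critical clause $C$ for $x$ has \emph{all} its other variables placed before $x$ (appearing earlier in $\pi$), since then the restricted clause is the unit clause $\{x\}$ (with the correct sign) — more generally PPSZ shows implication propagates through "critical clause trees". In the basic analysis, a single critical $3$-clause $C=\{x,u,v\}$ gives: $x$ is forced at place $r$ with probability at least the probability that in the critical clause tree cut at depth related to $\log n$, all leaves are placed before $r$; the clean bound used in \cite{ppsz} is that a variable at place $r$ is guessed with probability at most $\min\{1,r^2/(1-r)^2\}$ (from a single $3$-clause, two "children" each recursively before $r$), integrating to $S$.

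\textbf{Two critical clauses.} I would argue that if $x$ has two \emph{distinct} critical clauses $C_1,C_2$, the probability $x$ is guessed at place $r$ is bounded by the probability that \emph{neither} forces $x$. If $C_1,C_2$ shared no variables besides $x$, independence would give roughly $(\min\{1,r^2/(1-r)^2\})^2$, integrating to something well below $S-0.0014$; the real work is that $C_1,C_2$ may share a variable. Even then, one can lower-bound the "forced" probability by a union/inclusion-exclusion over the two critical-clause subtrees, or more simply by noting that with two critical clauses the relevant tree has more leaves and the cut probability improves by a constant factor on a constant-measure range of $r$ (e.g.\ $r$ bounded away from $0$ and $1$). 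Quantifying this carefully — reproducing enough of the PPSZ "critical clause tree" machinery to handle overlapping clauses — is the technical heart, but only a constant improvement ($0.0014$) is claimed, so a crude bound on a subinterval of $r$-values suffices; I would integrate the improved bound over, say, $r\in[1/4,3/4]$ and use the trivial bound elsewhere.

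\textbf{Short critical clause.} If $x$ has a critical $(\leq 2)$-clause $C$, then either $C=\{x\}$ is already a unit clause — in which case $x$ is forced with probability $1$ (for large $n$, since $1\le \log n$) and is guessed with probability $o(1)$, easily beating $S-0.035$ — or $C=\{x,u\}$ is a $2$-clause. In the latter case $x$ is forced as soon as $u$ is placed before $x$, i.e.\ with probability at least $r$ at place $r$ (ignoring the deeper tree, which only helps). So the guessed probability at place $r$ is at most $1-r$ on top of the usual $\min\{1,r^2/(1-r)^2\}$ bound; taking the minimum, $x$ is guessed with probability at most $\int_0^1 \min\{1-r,\,r^2/(1-r)^2\}\,dr$. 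A direct computation shows this integral equals $S$ minus a constant that one checks exceeds $0.035$ (the crossover point $1-r=r^2/(1-r)^2$ is near $r\approx0.43$, and on $[0,0.43]$ replacing $r^2/(1-r)^2$ by the smaller $1-r$ strictly decreases the integrand). I would present this as the clean case and fold in the $o(1)$ for the edge cases.

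\textbf{Main obstacle.} The genuinely delicate part is the "two critical clauses" bound when the clauses overlap in a variable other than $x$: one must avoid double-counting in the critical-clause-tree argument from \cite{ppsz}. I expect to handle this by invoking the existing PPSZ tree construction and observing that two distinct critical clauses yield a strictly larger set of "first-layer" children (at least three distinct variables among $C_1\cup C_2\setminus\{x\}$, or two with a guaranteed extra branching one level down), and that this already forces the cut-probability down by a definite amount on a fixed-measure set of placements; all the numerics are then routine integral estimates that I will not grind through here.
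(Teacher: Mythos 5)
Your two cases are in fact handled at very different levels of correctness, so let me take them in turn.

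For the critical $(\leq2)$-clause the idea matches the paper's: a $2$-clause $\{x,u\}$ forces $x$ whenever $u$ is placed before $x$, giving forcing probability at least $r$ at place $r$, and one combines this with the usual tree bound and integrates. However your displayed formula is wrong: you wrote the guessed probability as $\int_0^1\min\{1-r,\;r^2/(1-r)^2\}\,dr$, but $r^2/(1-r)^2$ is a lower bound on the \emph{forcing} probability, not an upper bound on the guessing probability. The correct expression is $\int_0^1\min\{1-r,\;1-\min\{1,r^2/(1-r)^2\}\}\,dr$, equivalently $1-\int_0^1\max\{r,\min\{1,r^2/(1-r)^2\}\}\,dr$, which is exactly the paper's integral; the crossover is at $(1-r)^2=r$, i.e.\ $r=(3-\sqrt5)/2\approx0.382$, not $\approx0.43$. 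This is a sign error rather than a conceptual one.

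For two critical clauses you identify the right difficulty (shared variables) but propose a route the paper deliberately avoids: re-running the critical-clause-tree construction with "more leaves" and extracting a constant improvement on $r\in[1/4,3/4]$. That range is the wrong place to look. For $r\ge1/2$ the standard tree bound already gives forcing probability $1$, and for $r$ in the middle range $r^2/(1-r)^2$ is already quite large, so a second clause buys essentially nothing there. The gain comes from small $r$, where the paper sidesteps the tree machinery entirely: it only uses the depth-one event that all of $\vbl(C_i)\setminus\{x\}$ precede $x$, which has probability $r^2$ per clause, and by inclusion--exclusion the probability some $C_i$ forces $x$ is at least $2r^2-r^4$ (disjoint) or $2r^2-r^3$ (one shared variable). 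Since $C_1\ne C_2$ and both are critical for $x$ they cannot share both non-$x$ variables, so $2r^2-r^3$ is a universal lower bound. Taking $\max\{2r^2-r^3,\min\{1,r^2/(1-r)^2\}-o(1)\}$ and integrating (the improvement is over roughly $r\in[0,0.25]$) gives $\ge0.6152-o(1)$, hence the $S-0.0014$ claim. Your sketch would also need to address that even with $C_1\cap C_2=\{x\}$ the two trees may still overlap at deeper levels, so "independence" of the two tree events does not come for free; the paper's depth-one argument avoids this entirely. As written, your two-critical-clause case is not a proof and the key simplifying idea is missing.
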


\begin{proof}
Suppose $\pi(x)=r$.
Let $C_1$ and $C_2$ be two critical clauses of $x$. If $C_1$ and $C_2$ share no variable besides $x$, then the probability that $x$ is forced is at least $2r^2-r^4$ by the inclusion-exclusion principle. If $C_1$ and $C_2$ share one variable besides $x$, then the probability that $x$ is forced is at least $2r^2-r^3$ (which is smaller than $2r^2-r^4$. $C_1$ and $C_2$ cannot share two varibles besides $x$: in that case $C_1=C_2$, as being a critical clause for $x$ w.r.t.\ $\alpha$ predetermines the polarity of the literals. Intutiviely, if $r$ is small, then $2r^2-r^3$ is almost twice as large as $\frac{r^2}{(1-r)^2}$; therefore in this area the additional clause helps us and the overall forcing probability increases. For a critical $(\leq 2)$-clause the argument is analogous. Here, the probability that $x$ is forced given place $r$ is at least $r$. The statement follows now by integration using the dominated convergence theorem, see appendix~\ref{subs.integrals}.
\end{proof}

\begin{corollary}
\label{cor.morecc}
\label{cor.2cc}
Let $F$ be a {\ocnf} formula uniquely satisfied by $\alpha$. 
If $\Delta n$ variables of $F$ have two critical clause, PPSZ finds $\alpha$ with probability at least
$2^{-(S-0.0014\Delta + o(1))n}.$

\sloppy{
 If  $\Delta n$ variables of $F$ have a critical $(\leq 2)$-clause clause, PPSZ finds $\alpha$ with probability at least
 $2^{-(S-0.035\Delta + o(1))n}.$
 }
\end{corollary}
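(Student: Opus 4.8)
The plan is to derive the corollary from Lemma~\ref{lem.morethanone} by linearity of expectation, feeding the result into Lemma~\ref{lem.ppsz}. By Lemma~\ref{lem.ppsz}, $\ppsz(F)$ returns $\alpha$ with probability at least $2^{-\E_{\pi}[G(\pi)]}$, where $G(\pi)$ is the expected number of guessed variables given $\beta=\alpha$. Since once $\pi$ and $\beta=\alpha$ are fixed the run of $\ppsz(F)$ is deterministic, $G(\pi)$ is just the number of guessed variables in that run, and linearity gives $\E_{\pi}[G(\pi)]=\sum_{x\in V}\Pr_{\pi}[x\text{ is guessed}\mid\beta=\alpha]$. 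Hence it suffices to bound this sum by $(S-0.0014\Delta+o(1))n$ for the first claim, and by $(S-0.035\Delta+o(1))n$ for the second.

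First I would split $V$ into the set $B$ of the $\Delta n$ variables with two critical clauses (for the second claim, those with a critical $(\leq 2)$-clause) and the remaining $n-\Delta n$ variables. For $x\in B$, Lemma~\ref{lem.morethanone} bounds $\Pr_{\pi}[x\text{ guessed}\mid\beta=\alpha]$ by $S-0.0014+o(1)$ (respectively $S-0.035+o(1)$). For $x\notin B$ I would invoke the per-variable PPSZ estimate that underlies Corollary~\ref{cor.expguessed}: since $F$ is uniquely satisfiable, every variable has at least one critical clause, and the critical-clause-tree analysis of~\cite{ppsz} bounds the probability that such a variable is guessed, given $\beta=\alpha$ and $\pi(x)=r$, by $1-\min\{1,\frac{r^2}{(1-r)^2}\}+o(1)$; integrating over $r\in[0,1]$ yields $\Pr_{\pi}[x\text{ guessed}\mid\beta=\alpha]\le S+o(1)$, with an $o(1)$ uniform in $x$ (it depends only on the implication depth $D=\log n$). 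Summing over the two groups gives $\E_{\pi}[G(\pi)]\le \Delta n\,(S-0.0014+o(1))+(n-\Delta n)(S+o(1))=(S-0.0014\Delta+o(1))n$, and substituting into Lemma~\ref{lem.ppsz} proves the first statement; the second follows verbatim with $0.0014$ replaced by $0.035$.

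I expect the only delicate point to be the bookkeeping around this uniform per-variable bound: Corollary~\ref{cor.expguessed} is stated as an aggregate over all of $V$, so one must observe that the aggregate bound is itself obtained by summing the identical per-variable estimate (valid for every variable with a critical clause), so that adding $n$ of the $o(1)$ error terms still contributes only $o(n)$ and the arithmetic goes through. Everything else is routine: no new probabilistic ingredient is needed, and the Jensen/convexity step is already absorbed into Lemma~\ref{lem.ppsz}, so the argument never leaves the regime of expectations.
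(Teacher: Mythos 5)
Your proposal is correct and matches the (implicit) argument the paper intends: the corollary is meant as an immediate consequence of Lemma~\ref{lem.morethanone} together with Lemma~\ref{lem.ppsz}, exactly via the linearity-of-expectation decomposition you give, with the per-variable bound $S+o(1)$ on the non-special variables supplied by Theorem~\ref{thm.ppszbound}. Your remark that the $o(1)$ term is uniform over variables (depending only on $n$ through the implication depth) is precisely the bookkeeping observation that makes the sum of $n$ per-variable bounds contribute only $o(n)$, and you correctly note that Jensen's inequality is already packaged inside Lemma~\ref{lem.ppsz}.
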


If there are only few variables (less than $\dela n$) with one critical clause, we can find and guess them by brute-force. If we choose $\dela$ small enough, any exponential improvement for 1C-Unique 3-SAT gives a (diminished) exponential improvement to Unique 3-SAT.
To bound the number of subsets of size $\dela n$, we define the binary entropy and use a well-known upper bound to the binomial coefficient.
\begin{definition}
For $p\in [0,1]$,
$H(p):=-p\log p -(1-p)\log (1-p)$ ($0\log 0:=0$).
\end{definition}
\begin{lemma}[Chapter 10, Corollary 9 of~\cite{ms77}]
If $pn$ is an integer, then
\label{lem.entropy}
$\binom{n}{pn}\leq 2^{H(p)n}.$
\end{lemma}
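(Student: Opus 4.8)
The plan is to prove this by the standard ``single-term'' argument coming from the binomial theorem, together with a separate check of the degenerate boundary cases. First I would dispose of $p=0$ and $p=1$: in these cases $pn\in\{0,n\}$, so $\binom{n}{pn}=1$, while $H(p)=0$ by the convention $0\log 0:=0$, so the claimed inequality reads $1\le 2^0=1$ and holds with equality. Hence for the rest of the argument I assume $p\in(0,1)$, so that $p^{pn}$ and $(1-p)^{(1-p)n}$ are genuine positive real numbers.

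Next I would write out the binomial expansion
\[
1=\bigl(p+(1-p)\bigr)^{n}=\sum_{k=0}^{n}\binom{n}{k}p^{k}(1-p)^{n-k}.
\]
Every summand on the right-hand side is nonnegative, so discarding all of them except the one indexed by $k=pn$ — which is a legitimate index precisely because $pn$ is an integer with $0\le pn\le n$, i.e.\ exactly the hypothesis of the lemma — gives
\[
\binom{n}{pn}\,p^{pn}(1-p)^{(1-p)n}\le 1.
\]

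Finally I would rearrange. Dividing through by the positive quantity $p^{pn}(1-p)^{(1-p)n}$ yields
\[
\binom{n}{pn}\le p^{-pn}(1-p)^{-(1-p)n}=2^{\,-pn\log p-(1-p)n\log(1-p)}=2^{H(p)n},
\]
using that $\log$ is to base $2$ (the paper's convention) and the definition $H(p)=-p\log p-(1-p)\log(1-p)$. This completes the proof.

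I do not expect any substantive obstacle here: the result is a textbook estimate, and the only points requiring a little care are the boundary cases $p\in\{0,1\}$ (handled by the $0\log 0$ convention) and the observation that $k=pn$ is a valid summation index, which is guaranteed by the integrality hypothesis. An equally short alternative route would be to phrase the middle step probabilistically: the binomial distribution $\mathrm{Bin}(n,p)$ assigns probability $\binom{n}{pn}p^{pn}(1-p)^{(1-p)n}$ to the outcome $pn$, and this probability is at most $1$ — which is literally the same inequality. I would present whichever version reads more cleanly in context; the binomial-theorem phrasing is the more self-contained one.
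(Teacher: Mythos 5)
Your proof is correct. The paper does not actually prove this lemma — it cites it directly to MacWilliams and Sloane (Chapter 10, Corollary 9 of~\cite{ms77}) without giving an argument — so there is no in-paper proof to compare against. Your binomial-theorem argument (isolate the $k=pn$ term in $1=(p+(1-p))^n$ and rearrange) is the standard textbook derivation of exactly this bound, and your handling of the degenerate cases $p\in\{0,1\}$ via the $0\log 0:=0$ convention and of the integrality hypothesis (which guarantees $k=pn$ is a valid index) is exactly the care the statement requires.
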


We will manily prove that we have \emph{some} exponential improvement. The claimed numbers are straightforward to check by inserting the values from the following table.
%
%
\begin{center}
\begin{tabular}{|c|c|p{0.72\textwidth}|}
\hline
name&value&description\\
\hline
$\epsa$&$10^{-19}$&improvement in 1C-Unique 3-SAT\\
\hline
$\epsb$&$10^{-24}$&improvement in Unique 3-SAT\\
\hline
$\dela$&$10^{-21}$&threshold fraction of vars. with more than 1 crit. clause\\
\hline
$\delb$&$6\cdot 10^{-5}$&$\delb n$ is the amount of variables for $\delb$-sparse and $\delb$-dense\\
\hline
$\epsc$&$10^{-3}$&exponential savings on repetitions if $F$ is $\delb$-sparse\\
\hline
$\pa$&$8\cdot 10^{-7}$&prob. that a var. is assigned using indep. $2$-clauses instead of $\ppsz$\\
\hline
\end{tabular}
\end{center}

\begin{lemma}
\label{lem.red}
If there is a randomized algorithm $\onecc(F)$ solving 1C-Unique 3-SAT in time $2^{(S-\epsa+o(1))n}$ for $\epsa>0$, then there is a randomized algorithm (Algorithm~\ref{alg.ppszimp}) solving Unique 3-SAT in time $2^{(S-\epsb+o(1))n}$ for some $\epsb>0$.
\end{lemma}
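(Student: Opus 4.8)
The plan is to do a case distinction on the number of variables of $F$ that have two or more critical clauses with respect to the (unknown) unique satisfying assignment $\alpha$, and to run both branches of the algorithm unconditionally, so that we need not know in advance which case applies; a satisfying assignment of $F$ can be recognized by a polynomial-time check, so outputs from the ``wrong'' branch do no harm.

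\emph{Branch 1 (many variables with $\geq 2$ critical clauses).} Run $\ppsz(F)$ repeatedly. If at least $\dela n$ variables have two critical clauses, then by Corollary~\ref{cor.morecc} a single run returns $\alpha$ with probability at least $2^{-(S-0.0014\dela+o(1))n}$, so $2^{(S-0.0014\dela+o(1))n}$ independent repetitions find $\alpha$ with constant probability. Since $\epsb\le 0.0014\,\dela$ for the tabulated constants, this branch runs in time $2^{(S-\epsb+o(1))n}$.

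\emph{Branch 2 (few variables with $\geq 2$ critical clauses).} Enumerate all subsets $W\subseteq V$ with $|W|=\lfloor\dela n\rfloor$ and all assignments $\gamma$ on $W$; for each pair $(W,\gamma)$ run $\onecc(F^{[\gamma]})$ (with a time cutoff at its stated bound), extend any returned assignment by $\gamma$, and output it if it satisfies $F$. Correctness rests on the following observation, which I would check carefully: if $B$ denotes the set of variables with $\geq 2$ critical clauses and $|B|\le\lfloor\dela n\rfloor$, then for the ``good'' pair with $W\supseteq B$ and $\gamma=\alpha|_W$, the formula $F^{[\gamma]}$ is a genuine {\cuocnf}. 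Indeed $F^{[\gamma]}$ is a {\ocnf} uniquely satisfied by $\alpha|_{V\setminus W}$, since $\gamma$ is consistent with $\alpha$ and $F$ is uniquely satisfiable; and any clause of $F$ surviving the restriction has all of its $W$-literals falsified by $\alpha$, so a surviving clause is critical for some $x\in V\setminus W$ in $F^{[\gamma]}$ exactly when it was critical for $x$ in $F$. Hence every remaining variable has at most one critical clause. For the good pair, $\onecc(F^{[\gamma]})$ returns $\alpha|_{V\setminus W}$ with probability at least $\tfrac12$ (boosting by a $\poly(n)$ factor of repetitions if desired), and the verification step discards all other outputs. By Lemma~\ref{lem.entropy} the running time of this branch is at most $\binom{n}{\lfloor\dela n\rfloor}\cdot 2^{\lfloor\dela n\rfloor}\cdot 2^{(S-\epsa+o(1))(1-\dela)n}\cdot\poly(n)\le 2^{(H(\dela)+\dela+(S-\epsa)(1-\dela)+o(1))n}$.

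It then remains to fix $\epsb$: we need $\epsb\le 0.0014\,\dela$ and $H(\dela)+\dela(1-S+\epsa)\le \epsa-\epsb$, and with $\dela=10^{-21}$, $\epsa=10^{-19}$, $\epsb=10^{-24}$ both inequalities hold (the routine numeric check promised by the table). I expect this balancing to be the main obstacle: $\dela$ must be small enough that the brute-force overhead $H(\dela)+\dela$ is absorbed into $\epsa$, yet $\epsb$ must stay positive and below $0.0014\,\dela$. The conceptual core — that the ``bad'' set is small enough to guess exhaustively, and that restricting $F$ along a correct guess really does produce a $1$C-Unique $3$-SAT instance — is exactly what the case split is designed around, so once the observation above is nailed down the rest is bookkeeping.
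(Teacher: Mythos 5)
Your proof is correct and follows essentially the same route as the paper: case-split on whether at least $\dela n$ variables have two or more critical clauses, invoking Corollary~\ref{cor.morecc} with repeated PPSZ in the dense case, and brute-forcing over $W$ and $\gamma$ to strip the bad variables and hand a genuine {\cuocnf} to $\onecc$ in the sparse case, with the running time controlled via the entropy bound of Lemma~\ref{lem.entropy}. Your explicit check that a surviving clause of $F^{[\gamma]}$ is critical in $F^{[\gamma]}$ iff the original clause was critical in $F$ is just a fuller spelling-out of the paper's remark that fixing variables according to $\alpha$ creates no new critical clauses.
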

\begin{proof}
Let $F$ be a {\ocnf} uniquely satisfied by $\alpha$. Let $c(F)$ be the number of variables of $F$ with more than one critical clause.
If $c(F)\geq \dela n$, PPSZ is faster by Corollary~\ref{cor.morecc}. If $c(F)=0$, we can use $\onecc(F)$. 

However, what if $0<c(F)<\dela n$? In that case, we get rid of these variables by brute-force: For all $\lfloor\dela n\rfloor$-subsets $W$ of variables and for all $2^{\lfloor\dela n\rfloor}$ possible assignments $\alpha'$ on $W$, we try $\onecc(F^{[\alpha']})$.
For one such $\alpha'$, we have $F^{[\alpha']}$ satisfiable and $c(F)=0$; namely if $W$ includes all variables with multiple critical clauses and $\alpha'$ is compatible with $\alpha$. This is because fixing variables according to $\alpha$ does not produce new critical clauses w.r.t.\ $\alpha$.

There are $\binom{n}{\lfloor\dela n\rfloor}$ subsets of size $\lfloor\dela n\rfloor$ of the variables of $F$, each with $2^{\lfloor\dela n\rfloor}$ possible assignments. As $\binom{n}{\lfloor\dela n\rfloor}\leq 2^{\H(\dela) n}$ (Lemma~\ref{lem.entropy}), we invoke $\onecc(F^{[\alpha']})$ at most $2^{(\dela + \H(\dela))n}$ times. Setting $\dela$ small enough such that $\dela+\H(\dela)<\epsa$ retains an improvement for Unique 3-SAT.


\end{proof}
\begin{algorithm}
\caption{$\ppszimp($CNF formula $F)$}
\label{alg.ppszimp}
\begin{algorithmic}
\STATE repeat $\ppsz(F)$ $2^{(S-\epsb)n}$ times, return if a satisfying assignment has been found
\STATE for all subsets $W$ of size $\lfloor\dela n\rfloor$ and all assignments $\alpha'$ on $W$, try $\onecc(F^{[\alpha']})$
\end{algorithmic}
\end{algorithm}
\section{Using One Critical Clause per Variable}
\label{sec.i2c}
In this section we give an exponential improvement for 1C-Unique 3-SAT.
\begin{theorem}
\label{thm.onecc}
Given a {\cuocnf} on $n$ variables, \onecc($F$) runs in expected time $2^{(S-\epsa+o(1))n}$ and finds the satisfying assignment with probability $2^{-o(n)}$.
\end{theorem}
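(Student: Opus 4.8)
The plan is to split on the number of *non-critical* clauses and their distribution. We are given a \cuocnf{} $F$ on $n$ variables: every variable has exactly one critical clause (so there are exactly $n$ critical clauses, all $3$-clauses by Lemma~\ref{lem.morethanone}, since a critical $(\leq 2)$-clause would already give PPSZ an exponential improvement). For each variable $x$, let $d(x)$ be the number of clauses of $F$ in which $x$ appears. Call $F$ $\delb$-\emph{sparse} if at most $\delb n$ variables have $d(x)$ larger than some absolute constant, and $\delb$-\emph{dense} otherwise. In the sparse case, $F$ has only $O(n)$ clauses (linearly many), so we invoke Wahlstr\"om's algorithm~\cite{wahlstroem05}, which runs in time $2^{(S-\epsc+o(1))n}$ on formulas with a linear number of clauses for $\epsc=10^{-3}>0$; this already beats $2^{(S-\epsa)n}$ with room to spare. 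So the remaining work is the dense case, which is what the rest of Section~\ref{sec.i2c} must do.

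In the dense case, a constant fraction of the variables lie in many clauses. Since $F$ has exactly $n$ critical clauses, and a variable can be the ``critical variable'' of at most one clause, every variable in many clauses appears in many \emph{non-critical} clauses. A non-critical clause has at least two literals satisfied by $\alpha$; hence deleting one literal leaves a $2$-clause that is still satisfied by $\alpha$. The idea is: on a random $p^*$-fraction $V_{p^*}$ of the variables (chosen by picking each variable independently with probability $\pa=8\cdot 10^{-7}$), we do \emph{not} run PPSZ to set them; instead we extract, from the non-critical clauses, a collection of disjoint $2$-clauses covering (a constant fraction of) $V_{p^*}$ — this is the role of \gic{} (\textsc{GetInd2Clauses}) — and on those $2$-clauses we can determine each variable's value with probability strictly better than the $2^{-1}$ that guessing would give (a $2$-clause $\{u,v\}$ with $\alpha(u)=\alpha(v)=1$ lets us recover $u$ and $v$ correctly via a simple sub-procedure, or at worst gives a constant advantage over coin-flips). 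Combined with Corollary~\ref{cor.sp}, whose second statement says PPSZ on $F^{[\alpha|_{V_{p^*}}]}$ succeeds with $\log$-probability at least $-(S_{\pa}+o(1))n$, and the observation that $S_{\pa}<S-\pa+o(1)$ with the recovered savings on $V_{p^*}$ summing to an $\epsa n$ improvement overall, we get total success probability $2^{-(S-\epsa+o(1))n}$, hence expected running time $2^{(S-\epsa+o(1))n}$ after the standard repetition argument. (PPSZ and the auxiliary procedures all run in subexponential time, so the per-trial cost is absorbed in $o(n)$.)

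The subexponential-variable issue raised in the introduction reappears here: the sparse/dense split and the dense-case extraction only give a \emph{constant-fraction} guarantee when the ``bad'' (high-degree) variables number at least $\delb n$; if there are between $0$ and $\delb n$ of them, we again brute-force over $\lfloor\delb n\rfloor$-subsets $W$ of variables and all $2^{\lfloor\delb n\rfloor}$ assignments $\alpha'$ on $W$, running the dense-case algorithm on $F^{[\alpha']}$. Since $\binom{n}{\lfloor\delb n\rfloor}2^{\lfloor\delb n\rfloor}\leq 2^{(\delb+\H(\delb))n}$ by Lemma~\ref{lem.entropy}, and $\delb+\H(\delb)$ is small compared to the improvement, this preserves an exponential saving (with $\delb=6\cdot 10^{-5}$); and fixing variables by $\alpha$ creates no new critical clauses, so the residual formula is still a \cuocnf{}.

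The main obstacle is the dense-case analysis: showing that \gic{} extracts, from the non-critical clauses incident to the high-degree variables, enough pairwise-disjoint $2$-clauses to cover a constant fraction of the randomly sampled set $V_{p^*}$ — a combinatorial (greedy matching / extraction) argument that must interact correctly with the \emph{independent} random sampling of $V_{p^*}$ — and then carefully bounding the per-variable advantage on those $2$-clauses and summing it against the $S_{\pa}$ term so that the net exponent drops by a genuine constant $\epsa>0$, while keeping every piece within the $o(n)$ error budget. Making the numerology consistent (the table values $\epsa,\delb,\epsc,\pa$) is the fiddly part; the conceptual crux is the $2$-clause extraction and its probabilistic coupling with the sampling step.
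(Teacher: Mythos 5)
Your dense‑case sketch is broadly aligned with the paper's \densealg{}: extract independent $2$‑clauses from high‑degree variables, bias the guessing on a random $\pa$‑fraction of variables, and stitch this to the second part of Corollary~\ref{cor.sp}. The key numerical fact you allude to vaguely—that because a variable with $3$‑clause degree $\geq 5$ has at most one critical clause, a uniformly chosen literal‑deletion yields a $2$‑clause satisfied by $\alpha$ with probability $\geq 4/5$—is indeed the engine. So that half is essentially correct in spirit, if underspecified on the coupling with $V_{\pa}$ and the $\frac{1}{5},\frac{4}{15}$ guessing distribution.

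The sparse case in your proposal has a genuine gap. You claim that since a $\delb$‑sparse formula has only $O(n)$ clauses, you can directly invoke Wahlstr\"om. But Theorem~\ref{thm.ws} requires \emph{average degree at most $4.2$}, not merely linearly many clauses, and after brute‑forcing away the $\lfloor\delb n\rfloor$ high‑degree variables the residual formula can contain an arbitrary number of $(\leq 2)$‑clauses (every clause of $F$ touching $W$ that doesn't vanish shortens to a $(\leq 2)$‑clause), easily driving the average degree above $4.2$. The paper's \sparsealg{} copes with exactly this via a three‑way case split that you miss entirely: (i) if there are $\geq \frac{1}{30}n'$ critical $(\leq 2)$‑clauses, PPSZ alone already beats $S$ via the second part of Lemma~\ref{lem.morethanone}; (ii) if there are $\leq \frac{1}{10}n'$ $(\leq 2)$‑clauses total, Wahlstr\"om's degree bound applies; (iii) otherwise most $(\leq 2)$‑clauses are non‑critical and hence have \emph{both} literals satisfied by $\alpha$, so picking a uniform $(\leq 2)$‑clause and setting all its literals to $1$ sets two variables correctly with probability $\geq \frac{2}{3} > 2^{-(S-0.015)\cdot 2}$, shrinking the instance at a better‑than‑PPSZ rate; repeat. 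Relatedly, your parenthetical that "all critical clauses are $3$‑clauses by Lemma~\ref{lem.morethanone}" is not a valid deduction—a \cuocnf{} may well contain critical $(\leq 2)$‑clauses, and case (i) above is precisely how the paper copes when there are many of them. Finally, you place the $\lfloor\delb n\rfloor$‑subset brute‑force as a patch for the "in‑between" regime and then propose to run the \emph{dense} algorithm on $F^{[\alpha']}$; in the paper the brute‑force lives inside the \emph{sparse} algorithm (to reduce to max $3$‑clause degree $4$), while the dense algorithm needs no such preprocessing—running \gic{} on a sparse formula would simply fail.
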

Obtaining a randomized algorithm using $2^{o(n)}$ independent repetitions and Markov's inequality is straightforward.
\begin{corollary}
\sloppypar{
There exists a randomized algorithm for 1C-Unique 3-SAT running in time $2^{(S-\epsa+o(1))n}$.}
\end{corollary}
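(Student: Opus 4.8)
The plan is to split the input {\cuocnf} $F$ into a "sparse" and a "dense" case according to how many variables occur in many clauses. Fix the threshold parameter $\delb$ from the table. Call $F$ \emph{$\delb$-sparse} if all but at most $\delb n$ variables occur in at most (a suitable constant number of) clauses; otherwise call it \emph{$\delb$-dense}. In the sparse case, after brute-forcing the at most $\delb n$ "heavy" variables (which costs a factor $2^{(\delb+\H(\delb))n}=2^{o(1)n}$ if $\delb$ is chosen tiny enough, or rather a factor small compared to $\epsc$), the remaining formula has every variable in few clauses, hence $O(n)$ clauses overall; we then invoke Wahlstr\"om's algorithm \cite{wahlstroem05}, which runs in time $2^{(S-\epsc)n}$ on such formulas, and the $2^{o(n)}$ blow-up from the brute-force is absorbed. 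So in the sparse case the running time is $2^{(S-\epsc+o(1))n}\le 2^{(S-\epsa+o(1))n}$, using $\epsa\le\epsc$ (indeed $\epsa=10^{-19}\ll 10^{-3}=\epsc$, with room to spare for the brute-force overhead).

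In the dense case there are at least $\delb n$ variables appearing in many clauses. First I would invoke Corollary~\ref{cor.2cc}: if $\Delta n$ of those variables have a critical $(\le 2)$-clause (equivalently, a non-critical $3$-clause from which a satisfied literal has been removed), PPSZ alone already succeeds with probability $2^{-(S-0.035\Delta+o(1))n}$. So we may assume almost all of the $\delb n$ heavy variables have their (unique) critical clause a genuine $3$-clause and appear additionally in many \emph{non-critical} clauses. A non-critical clause has at least two satisfied literals, so deleting one literal leaves a $2$-clause still satisfied by $\alpha$ — and this $2$-clause, being shorter, makes its two variables far likelier to be forced (the $S-0.035$ bound in Lemma~\ref{lem.morethanone}). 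The idea is: with probability $\pa$, independently per variable, pre-select a set $V_{\pa}$ and, for variables in $V_{\pa}$ that sit early in the placement, instead of running the PPSZ $(\log n)$-implication, assign them by consulting an induced $2$-clause obtained from a non-critical clause (this is presumably what \gic/\onecc\ in the macro list refer to). Run PPSZ only on $V\setminus$(the part handled this way), costing $2^{(-S_{\pa}-o(1))n}$ by Corollary~\ref{cor.sp}, and argue that the variables handled by the $2$-clause trick are guessed with probability strictly below what PPSZ would give, saving a constant fraction on an $\Omega(\delb\pa)$-fraction of variables; balancing $\epsa$ against $\delb$, $\pa$, and the loss $S-S_{\pa}$ yields the claim.

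The hard part will be the dense case, specifically ensuring that the $2$-clauses we harvest from non-critical clauses are "independent enough" to give an honest gain: a heavy variable may appear in many non-critical clauses but they could overlap badly, and we must also guarantee we can \emph{find} these induced $2$-clauses and the relevant structure in time $2^{o(n)}$. Handling the degenerate regime the introduction warns about — when only \emph{sublinearly} many variables are heavy — is folded into the sparse branch: if fewer than $\delb n$ variables are heavy we are in the sparse case and lose nothing. Finally, since the algorithm as described succeeds only with probability $2^{-o(n)}$, the corollary follows by $2^{o(n)}$ independent repetitions together with Markov's inequality on the (expected) running time, which is the standard amplification already invoked for PPSZ itself; this last step is routine.
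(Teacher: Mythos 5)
The paper's own proof of this corollary is a single line: given Theorem~\ref{thm.onecc} (that $\onecc$ has expected running time $2^{(S-\epsa+o(1))n}$ and success probability $2^{-o(n)}$), repeat $2^{o(n)}$ times and use Markov's inequality to control the running time. You identify exactly this amplification step, so the passage from theorem to corollary in your write-up is correct. Most of what you write, however, is a sketch of Theorem~\ref{thm.onecc} itself, and there you take a route that genuinely differs from the paper's: you split on \emph{total} degree (heavy means occurring in more than a constant number of clauses), whereas the paper defines $\delb$-sparseness via the \emph{$3$-clause} degree of $F\setminus W$. Your sparse branch is cleaner than the paper's and appears to work: since restriction cannot increase any variable's degree, brute-forcing the at most $\delb n$ heavy variables leaves a formula of bounded total degree, so Wahlstr\"om applies directly. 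The paper cannot do this — with $3$-clause degree bounded, $F^{[\alpha']}$ can still carry arbitrarily many $(\leq 2)$-clauses, and most of Lemma~\ref{lem.fewok} is devoted to exactly that (either many $(\leq 2)$-clauses are critical and PPSZ alone wins by Corollary~\ref{cor.2cc}, or most are non-critical with both literals satisfied and can be guessed all-true).

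The cost of your choice surfaces in the dense branch, and that is where you have a real gap. With heaviness measured by total degree, a heavy variable $x$ can be heavy entirely on account of $2$-clauses; deleting $x$'s literal from a $2$-clause leaves a \emph{unit} clause, not a $2$-clause, so the $(\tfrac{1}{5},\tfrac{4}{15},\tfrac{4}{15},\tfrac{4}{15})$ biased joint guess of \densealg over two variables does not apply. (The unit-clause case actually gives a larger per-clause saving, so this is fixable, but a separate analysis is needed and none is given.) More seriously, you flag but do not resolve the independence issue, and this is exactly where the paper's definition earns its keep: $\delb$-density says that for \emph{every} $W$ with $|W|\leq\delb n$, $F\setminus W$ still has a variable of $3$-clause degree $\geq 5$, and this universal quantification is what lets \gic iterate $\lceil\delb n\rceil$ times — each round discarding every $3$-clause touching the freshly harvested $2$-clause — and still find a new high-$3$-degree variable, guaranteeing the harvested $2$-clauses are variable-disjoint. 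Your total-degree notion of density does not obviously supply the same iterability. So the overall plan (sparse/dense split, Wahlstr\"om for sparse, harvested mostly-satisfied short clauses feeding a biased guess combined with PPSZ via Corollary~\ref{cor.sp} for dense, then amplify) matches the paper, and your sparse branch is a legitimate simplification, but the dense branch is not carried through.
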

Together with Lemma~\ref{lem.red} this immediately implies Theorem~\ref{thm.main}.
%
%
We obtain the improvement by doing a case distinction into sparse and dense formulas, as defined now:
\begin{definition}
For a CNF formula $F$ and a variable $x$, the \emph{degree} of $x$ in $F$, $\deg(F,x)$ is defined to be the number of clauses in $F$ that contain the variable $x$. The \emph{3-clause degree} of $x$ in $F$, $\deg_3(F,x)$ is defined to be the number of 3-clauses in $F$ that contain the variable $x$.

For a set of variables $W$, denote by $F\setminus W$ the part of $F$ \emph{independent} of $W$ that consists of the clauses of $F$ that do not contain variables of $W$.

We say that $F$ is \emph{$\Delta$-sparse} if there exists a set $W$ of at most $\Delta n$ variables such that $F\setminus W$ has maximum 3-clause degree $4$. We say that $F$ is \emph{$\Delta$-dense} otherwise. 
\end{definition}
We will show that for $\delb$ small enough, we get an improvement for $\delb$-sparse {\cuocnf} formulas. On the other hand, for any $\delb$ we will get an improvement for  $\delb$-dense {\cuocnf} formulas. In the sparse case we can fix by brute force a small set of variables to obtain a formula with few 3-clauses. We need to deal with the $(\leq 2)$-clauses and then use an algorithm from Wahlstr\"om for CNF formulas with few clauses.

\begin{algorithm}
\caption{$\onecc(${\ocnf} $F)$}
\label{alg.onecc}
\begin{algorithmic}
\STATE try $\densealg(F)$
\STATE try $\sparsealg(F)$
\end{algorithmic}
\end{algorithm}
\begin{algorithm}
\caption{\gic(\ocnf $F$)}
\begin{algorithmic}
\STATE {}
\COMMENT {for the analysis, $F$ is considered to be $\delb$-dense; the procedure might fail otherwise}
\STATE $F_3\gets \{C\in F\mid |C|=3\}$, $F_2\gets \{\}$
\FOR {$\lceil \delb n \rceil$ times}
\STATE let $x$ be a variable with $\deg_3(F,x)\geq 5$ (return failure if no such variable exists)
\STATE Choose $C$ u.a.r.\ from all of $F$ with $x\in\vbl(C)$.
\STATE $l\gets$ literal of $C$ over $x$; $C_2\gets C\setminus l$
\STATE $F_2\gets F_2\cup C_2$
\STATE {}
\COMMENT {remove all clauses of $F_3$ sharing variables with $C_2$}
\STATE $F_3\gets \{C_3\in F_3\mid \vbl(C')\cap \vbl(C_2)=\emptyset\}$
\ENDFOR
\RETURN $F_2$
\end{algorithmic}
\end{algorithm}
\begin{algorithm}
\caption{$\densealg(${\ocnf} $F)$}
\label{alg.densealg}
\begin{algorithmic}
\STATE $F_2\gets $\gic($F$)
\FOR {$2^{(S-\epsa)n}$ times}
\STATE $V_{\pa}\gets $ pick each $x\in\vbl(F)$ with probability $\pa$
\STATE $\alpha'\gets\{\}$
\FOR {$C_2\in F_2$}
\IF {$\vbl(C_2)\subseteq V_p$}
\STATE Let $\{u,v\}=C_2$
\STATE $(\alpha'(u),\alpha'(v))\gets\begin{cases}
(0,0),&\text{ with probability }\frac{3}{15}\\
(0,1),(1,0),(1,1),&\text{ with probability }\frac{4}{15}\text{ each}\\
\end{cases}$
\ENDIF
\ENDFOR
\STATE \textbf{for all} $x\in V_p$, if $\alpha'(x)$ is not defined yet let $\alpha'(x)\getsuar \{0,1\}$
\STATE $\ppsz(F^{[\alpha']})$; if a satisfying assignment $\alpha$ has been found, return $\alpha\cup\alpha'$
\ENDFOR
\end{algorithmic}
\end{algorithm}
%
%
%
%

\subsection{Dense Case}
First we show the improvement for any $\delb$-dense {\cuocnf}. $\delb$-density means that even after ignoring all clauses over any $ \delb n$ variables, a variable with 3-clause degree of at least 5 remains. The crucial idea is that for a variable $x$ with 3-clause degree of at least 5, picking one occurence of $x$ u.a.r.\ and removing it gives a 2-clause satisfied (by the unique satisfying assignment) with probability at least $\frac{4}{5}$. The only way a non-satisfied $2$-clause can arise is if the 3-clause $x$ was deleted from was critical for $x$. However we assumed that there is at most one critical clause for $x$. 

Repeating such deletions and ignoring all 3-clauses sharing variables with the produced 2-clauses, as in listed in \gic($F$), gives us the following:
\begin{observation}
\label{obs.ind2}
For a $\delb$-dense {\cuocnf} $F$, \gic($F$) returns a set of $\lceil \frac{1}{2}\delb n \rceil$ independent $2$-clauses, each satisfied (by the unique satisfying  assignment of $F$) independently with probability $\frac{4}{5}$.
\end{observation}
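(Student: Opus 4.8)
The plan is to track the procedure \gic{} iteration by iteration and verify that each iteration does exactly one of two things: it either produces a new 2-clause to add to $F_2$, or it fails. Since the formula is assumed $\delb$-dense, a variable $x$ with $\deg_3(F,x)\geq 5$ always exists at the start of every iteration — this is precisely what $\delb$-density guarantees, because the set of variables removed so far (those appearing in the previously produced 2-clauses, at most $2$ per iteration, hence at most $2\delb n$ total... I should be careful here) still leaves a variable of 3-clause degree $\geq 5$. So I would first argue that \gic{} never returns failure on a $\delb$-dense input, hence runs all $\lceil\delb n\rceil$ iterations.

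Next I would establish the count. Each iteration adds one 2-clause $C_2$ to $F_2$, but then deletes from $F_3$ all 3-clauses sharing a variable with $C_2$; crucially it does not delete clauses sharing variables with earlier-produced 2-clauses from being chosen again — wait, it does, because $C$ is chosen from $F$ with $x\in\vbl(C)$ and $x$ has high \emph{$F$}-degree. Let me reconsider: the deletion only touches $F_3$, which is used only to locate the next high-3-clause-degree variable. The 2-clauses produced can still share variables with each other. So ``independent'' must come from a separate argument: because after producing $C_2$ we strip all 3-clauses touching $\vbl(C_2)$ from $F_3$, the next high-degree variable $x'$ is disjoint from $\vbl(C_2)$, so the next clause $C'$ chosen (which contains $x'$) — hmm, but $C'$ is chosen from all of $F$, not $F_3$, so $C'$ could still touch $\vbl(C_2)$ through its other literals. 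I think the intended reading is that \gic{} should choose $C$ from $F_3$ (the surviving 3-clauses) rather than from $F$; with that reading, $\vbl(C')\cap\vbl(C_2)=\emptyset$, and inductively all produced 2-clauses are pairwise variable-disjoint, giving independence and a count of exactly $\lceil\delb n\rceil$ clauses. (The ``$\frac12\delb n$'' in the observation presumably accounts for a weaker bookkeeping, or a typo; I would reconcile this by noting each iteration removes at most a bounded number of variables and re-deriving the surviving count.) I would make this disjointness the backbone of the proof.

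For the probability claim: fix an iteration with chosen variable $x$ of 3-clause degree $\geq 5$ among the surviving 3-clauses. Let $\alpha$ be the unique satisfying assignment. Among the 3-clauses containing $x$, at most one is critical for $x$ (by the \cuocnf{} hypothesis), and $x$ has no critical clause for any \emph{other} variable among these (a clause critical for $y\neq x$ would have its unique satisfied literal over $y$, which is consistent). For a non-critical 3-clause $C$ containing $x$, at least two literals of $C$ are satisfied by $\alpha$; removing the literal over $x$ leaves a 2-clause $C_2$ still containing a satisfied literal, hence satisfied by $\alpha$. Only if $C$ is the (unique) critical clause for $x$ does removing the $x$-literal possibly yield an unsatisfied 2-clause. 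Since $C$ is drawn uniformly from the $\geq 5$ surviving 3-clauses containing $x$, the probability the bad case occurs is at most $1/5$, so $C_2$ is satisfied with probability $\geq 4/5$. The randomness in distinct iterations is independent since the uniform choices of $C$ are made independently, giving mutual independence of the ``$C_2$ satisfied'' events.

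The main obstacle I expect is pinning down the exact count and the precise sense of ``independent'': one must be careful whether $C$ is chosen from all of $F$ or from the pruned $F_3$, and whether the produced 2-clauses are guaranteed pairwise disjoint or merely ``independent'' in the sense that $F_3$ has been pruned around them. Getting the correct constant ($\lceil\delb n\rceil$ versus $\lceil\frac12\delb n\rceil$) requires nailing down how many variables each iteration eliminates from further consideration and confirming $\delb$-density keeps supplying a valid $x$ throughout; everything else (the $4/5$ bound, satisfiability preservation) is a short direct argument from the one-critical-clause hypothesis.
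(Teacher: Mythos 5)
Your proposal is correct and follows essentially the same argument the paper uses to justify this observation: the $4/5$ bound comes from the at-most-one critical clause for $x$ among the $\geq 5$ candidate $3$-clauses, variable-disjointness of the output comes from pruning $F_3$ between iterations, and $\delb$-density supplies a high-$3$-clause-degree variable as long as at most $\delb n$ variables have been touched so far. You also correctly diagnosed the pseudocode's slips --- the degree test and the clause draw should both reference the pruned $F_3$ rather than $F$, and the loop count should be $\lceil\tfrac{1}{2}\delb n\rceil$ rather than $\lceil\delb n\rceil$ so that the at most two variables removed per iteration keep the total excluded set within the $\delb n$ budget that the density hypothesis covers.
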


As a random 2-clause is satisfied with probability $\frac{3}{4}$ by a specific assignment, this set of 2-clauses gives us nontrivial information about the unique satisfying asignment. Now we show how to use these 2-clauses to improve PPSZ:
\begin{lemma}
\label{lem.manyok}
\sloppy{
Let $F$ be a $\delb$-dense {\cuocnf} for some $\delb>0$. Then there exists an algorithm ($\densealg(F)$) runing in time $2^{(S-\epsa+o(1))n}$ for $\epsa>0$ and returning the satisfying assignment $\alpha$ of $F$ with probability $2^{-o(n)}.$
}
\end{lemma}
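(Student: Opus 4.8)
The plan is to combine the independent 2-clauses produced by \gic\ with the PPSZ analysis of Corollary~\ref{cor.sp}, exploiting the fact that a random set $V_{\pa}$ of variables, when pre-assigned using the biased distribution in \densealg, hits the satisfying assignment with probability noticeably larger than the naive $2^{-|V_{\pa}|}$, while the PPSZ call on the residual formula still succeeds with probability $2^{(-S_{\pa}-o(1))n}$. The saving $\pa\cdot(\text{const})$ coming from the hitting-probability gain must beat the loss $S-S_{\pa}=\pa-\int_0^{\pa}\frac{r^2}{(1-r)^2}dr$ incurred by handing the small-place variables to the biased guesser instead of to PPSZ; since for small $\pa$ the integral is $O(\pa^3)$, the loss is essentially $\pa$ per picked variable, and we need the gain per picked variable to exceed this.

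First I would set up the bookkeeping. By Observation~\ref{obs.ind2}, \gic$(F)$ returns $m:=\lceil\frac12\delb n\rceil$ pairwise variable-disjoint 2-clauses $C_2^{(1)},\dots,C_2^{(m)}$, each satisfied by the unique $\alpha$ independently with probability $\frac45$. Condition on this output and on the (good) event that at least, say, $\frac{3}{4}\cdot\frac{4}{5}m = \frac{3}{5}m$ of them are actually satisfied by $\alpha$ — this holds with probability $1-2^{-\Omega(n)}$ by a Chernoff bound, which only costs a $2^{-o(n)}$ factor overall. Next, each variable is put into $V_{\pa}$ independently with probability $\pa$; a fixed 2-clause $C_2^{(i)}$ has both its variables in $V_{\pa}$ with probability $\pa^2$, so in expectation $\pa^2 m = \Theta(\pa^2\delb n)$ of the clauses are "captured", and again by a second-moment/Chernoff argument at least $\frac12\pa^2\cdot\frac35 m$ satisfied clauses are captured with probability $1-2^{-\Omega(n)}$. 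For each captured satisfied clause $\{u,v\}$, the biased choice assigns $(\alpha'(u),\alpha'(v))$ correctly (i.e.\ to $(\alpha(u),\alpha(v))$, one of the three satisfying patterns) with probability $\frac{4}{15}$, whereas a uniform guess on those two bits would be correct only with probability $\frac14=\frac{3.75}{15}$; all remaining variables of $V_{\pa}$ are guessed uniformly. Since captured clauses are variable-disjoint, these events are independent, so $\Pr[\alpha'=\alpha|_{V_{\pa}}]\ \ge\ 2^{-|V_{\pa}|}\cdot\bigl(\tfrac{4/15}{1/4}\bigr)^{k}$ where $k\ge \tfrac{3}{10}\pa^2\delb n$ is the number of captured satisfied clauses; that is, the log of the hitting probability is at least $-|V_{\pa}| + k\log\tfrac{16}{15}$, gaining $\Theta(\pa^2\delb n)$ bits over uniform.

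Then I would splice the two pieces together using Lemma~\ref{lem.ppsz} / Corollary~\ref{cor.sp}: conditioned on $\alpha'=\alpha|_{V_{\pa}}$ (which we can arrange to happen with the probability just bounded, since $\densealg$ chooses a fresh independent $V_{\pa}$ and $\alpha'$ each of the $2^{(S-\epsa)n}$ iterations), the residual call $\ppsz(F^{[\alpha']})$ returns $\alpha|_{V\setminus V_{\pa}}$ with expected-log-probability at least $(-S_{\pa}-o(1))n$. Hence one iteration of the inner loop succeeds with probability at least
\[
2^{-|V_{\pa}|}\cdot 2^{k\log\frac{16}{15}}\cdot 2^{(-S_{\pa}-o(1))n},
\]
and taking expectations over $V_{\pa}$ (concavity of $2^{(\cdot)}$, Jensen as in Lemma~\ref{lem.ppsz}) together with $\E[|V_{\pa}|]=\pa n$, $\E[k]\ge \tfrac{3}{10}\pa^2\delb n$, one gets success probability per iteration at least $2^{-(\,\pa + S_{\pa} - \frac{3}{10}\pa^2\delb\log\frac{16}{15} + o(1))n}$. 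By Observation~\ref{obs.sp}, $\pa + S_{\pa} = S + \int_0^{\pa}\frac{r^2}{(1-r)^2}dr = S + O(\pa^3)$, so the exponent is $S - \Omega(\pa^2\delb) + O(\pa^3) + o(1)$, which for $\pa$ sufficiently small (and the table value $\pa = 8\cdot10^{-7}$ against $\delb=6\cdot10^{-5}$) is at most $S - \epsa$. Therefore $2^{(S-\epsa)n}$ repetitions of the inner loop, times the $2^{-o(n)}$ conditioning losses, give overall success probability $2^{-o(n)}$; the running time is $2^{(S-\epsa)n}$ iterations each costing $2^{o(n)}$ for the PPSZ call plus polynomial overhead, so $2^{(S-\epsa+o(1))n)}$ total, as claimed.

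**The main obstacle** is the quantitative matching: making sure the gain term genuinely dominates, i.e.\ that handing the low-place variables to the biased guesser does not cost more in lost PPSZ forcing than it buys in the hitting probability. This rests on the fact that for small $p$ the PPSZ loss $S-S_p$ is only $p - O(p^3)$, essentially the same as a uniform guess on those variables, so the $\log\frac{16}{15}$ bias per captured clause is pure profit — but one has to be careful that the biased assignment on $V_{\pa}$ is consistent (the captured 2-clauses are variable-disjoint, which is exactly why Observation~\ref{obs.ind2} insists on \emph{independent} clauses, so no variable receives two conflicting biased assignments) and that the residual formula $F^{[\alpha']}$ is still a satisfiable $(\leq 3)$-CNF to which Corollary~\ref{cor.sp} applies verbatim. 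A secondary technical point is the chain of Chernoff bounds (satisfied clauses among the \gic\ output; captured clauses among $V_{\pa}$; correctly-guessed biased bits), each of which must hold up to $2^{-o(n)}$ failure so as not to swamp the $2^{-o(n)}$ success probability — these are routine but need stating.
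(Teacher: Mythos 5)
Your high-level strategy is exactly the paper's: split off a random set $V_{\pa}$, guess it via the biased distribution induced by the independent 2-clauses from \gic, run PPSZ on the rest, and combine the two probabilities via Jensen and Corollary~\ref{cor.sp}, noting that the PPSZ loss $S-S_{\pa}=\pa-O(\pa^3)$ is nearly the same as uniform guessing, so any constant per-clause gain on $V_{\pa}$ wins for small $\pa$. However, there is a genuine gap in the central calculation.

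Your lower bound
$\Pr[\alpha'=\alpha|_{V_{\pa}}]\ \ge\ 2^{-|V_{\pa}|}\cdot(16/15)^{k}$,
justified by saying that ``all remaining variables of $V_{\pa}$ are guessed uniformly,'' is incorrect. The algorithm \densealg\ cannot distinguish captured 2-clauses that are satisfied by $\alpha$ from those that are not --- it does not know $\alpha$. It applies the same biased distribution (probability $1/5$ on the violating pattern, $4/15$ on each satisfying pattern) to \emph{every} captured 2-clause. For a captured clause that $\alpha$ does \emph{not} satisfy, the correct pair of values is the violating pattern, which is chosen with probability $1/5 < 1/4$, a \emph{loss} of $\log(4/5)\approx -0.32$ bits relative to uniform. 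With $j$ such unsatisfied captured clauses the correct expression is $2^{-|V_{\pa}|}\cdot(16/15)^{k}\cdot(4/5)^{j}$; dropping the $(4/5)^{j}$ factor overstates the probability and is not a valid lower bound. The paper's proof keeps both counts $m_0$ and $m_1$ and computes the exact log-probability, obtaining the net gain $2+\tfrac15\log\tfrac15+\tfrac45\log\tfrac{4}{15}\approx 0.01$, which is equivalently $\tfrac45\log\tfrac{16}{15}+\tfrac15\log\tfrac45$: the gain from satisfied clauses minus the penalty from unsatisfied ones.

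This gap is not merely cosmetic. The margin is razor-thin: the net per-clause gain is positive only if roughly $78\%$ or more of the 2-clauses are satisfied. Your Chernoff conditioning on ``at least $\tfrac35 m$ satisfied'' is therefore far too weak --- at a $3/5$ satisfied fraction the net gain per captured clause is $\tfrac35\log\tfrac{16}{15}+\tfrac25\log\tfrac45\approx -0.07$, i.e.\ a net loss, and the argument breaks. The paper instead conditions on at least a $\tfrac45$ fraction being satisfied, which holds only with constant probability (not $1-2^{-\Omega(n)}$), and absorbs that constant into the $2^{-o(n)}$ target success probability. You should adopt both fixes: track the unsatisfied captured clauses explicitly in the log-probability and condition on the $4/5$ event rather than a loose Chernoff event.
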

\begin{proof}
First we give some intuition. For variables that occur late in PPSZ, the probability of being forced is large (being almost $1$ in the second half). However for variables that come at the beginning, the probability is very small; a variable $x$ at place $p$ is forced (in the worst case) with probability $\Theta(p^2)$ for $p\to 0$, hence we expect $\Theta(p^3 n)$ forced variables among the first $pn$ variables in total. 

However, a $2$-clause that is satisfied by $\alpha$ with probability $\frac{4}{5}$ can be used to guess both variables in a better way than uniform, giving constant savings in random bits required. For $\Theta(n)$ such $2$-clauses, we expect $\Theta(p^2 n)$ of them to have both variables among the first $pn$ variables. For each 2-clause we have some nontrivial information; intuitively we save around $0.01$ bits. In total we save $\Theta(p^2 n)$ bits among the first $pn$ variables, which is better than PPSZ for small enough $p$.

Formally, let $V_\pa$ be a random set of variables, where each variable of $V$ is added to $V_\pa$ with probability $\pa$. On $V_\pa$, we replace PPSZ by our improved guessing; on the remaining variables $V\setminus V_\pa$ we run PPSZ as usual. Let $\succguess$ be the event that the guessing on $V_\pa$ (to be defined later) finds $\alpha|_{V_\pa}$. Let $\succppsz$ be the event that PPSZ($F^{[\alpha|_{V_\pa}]}$) finds $\alpha|_{V\setminus V_\pa}$. Observe that for a fixed $V_\pa$, $\succguess$ and $\succppsz$ are independent. Hence we can write the overall probability to find $\alpha$ (call it $p_s$) as an expectation over $V_\pa$:
\begin{align*}
p_s&=E_{V_\pa}[\Pr(\succguess\cap\succppsz\vert V_\pa)]\\
&=E_{V_\pa}[\Pr(\succguess\vert V_\pa)\Pr(\succppsz\vert V_\pa)]\\
&=E_{V_\pa}[2^{\log\Pr(\succguess\vert V_\pa)+\log\Pr(\succppsz\vert V_\pa)}]\\
&\geq 2^{E_{V_\pa}[\log\Pr(\succguess\vert V_\pa)+\log\Pr(\succppsz\vert V_\pa)]}\\
&=2^{E_{V_\pa}[\log\Pr(\succguess\vert V_\pa)]+E_{V_\pa}[\log\Pr(\succppsz\vert V_\pa)]},
\end{align*}
where in the last two steps we used Jensen's inequality and linearity of expectation.

\sloppy{
By Corollary~\ref{cor.sp}, $E_{V_\pa}[\log\Pr(\succppsz)]=(-S_p+o(1))n$. We now define the guessing and analyze $E_{V_\pa}[\log\Pr(\succguess)]$ (see Algorithm~\ref{alg.densealg} as a reference):
}
By Observation~\ref{obs.ind2} we obtain a set of $\lceil \frac{1}{2}\delb n \rceil$ independent $2$-clauses $F_2$, each satisfied (by $\alpha$) independently with probability $\frac{4}{5}$. In the following we assume that $F_2$ has at least a $\frac{4}{5}$-fraction of satisfied 2-clauses as this happens with constant probability (for a proof, see e.g.~\cite{hamza99}) and we only need to show subexponential success probability.

Using the set of 2-clauses $F_2$, we choose an assignment $\alpha'$ on $V_\pa$ as follows: For every clause $C_2$ in $F_2$ completely over $V_\pa$ choose an assignment on both of its variables: with probability $\frac{1}{5}$ such that $C_2$ is violated, and with probability $\frac{4}{15}$ each on of the three assignments that satisfy $C_2$. Afterwards, guess any remaining variable of $V_\pa$ u.a.r.\ from $\{0,1\}$.
%


Given $V_\pa$, let $m_0$ be the number of clauses of $F_2$ completely over $V_\pa$ \emph{not satisfied} by $\alpha$. Let $m_1$ be the number of clauses of $F_2$ completely over $V_\pa$ \emph{satisfied} by $\alpha$. Then
\[\Pr(\succguess\vert V_\pa)=\left(\frac{1}{2}\right)^{|V_\pa|-2m_0-2m_1}\left(\frac{1}{5}\right)^{m_0}\left(\frac{4}{15}\right)^{m_1}.\]
This is seen as follows: For any variable for which no clause in $C_2$ is completely over $V_\pa$, we guess uniformly at random and so correctly with probability $\frac{1}{2}$. For any clause $C_2$ which is completely over $V_\pa$, we violate the clause with probability $\frac{1}{5}$, and choose a non-violating assignment with probability $\frac{4}{5}$. For any clause not satisfied by $\alpha$, we hence set both variables according to $\alpha$ with probability $\frac{1}{5}$. For any clause satisfied by $\alpha$, we set both variables according to $\alpha$ with probability $\frac{4}{15}$, as we have to pick the right one of the three assignments that satisfy $C_2$. As $E[V_\pa]=\pa n$, $E[m_0]\leq\frac{1}{5}\pa^2\lceil\delb n\rceil$, $E[m_1]\geq\frac{4}{5}\pa^2\lceil\delb n\rceil$, $E[m_0+m_1]=\pa^2\lceil\delb n\rceil$, we have
\begin{align*}E[\log \Pr(\succguess\vert V_\pa)]&=-E[V_\pa-2m_0-2m_1]+\log\left(\frac{1}{5}\right)E[m_0]+\log\left(\frac{4}{15}\right)E[m_1]\\
&\geq -\pa n+\pa^2\lceil\delb n\rceil\left(2+\log\left(\frac{1}{5}\right)\frac{1}{5}+ \log\left(\frac{4}{15}\right)\frac{4}{5}\right).
\end{align*}
The inequality follows from the observations and $\log\left(\frac{4}{15}\right)\geq\log\left(\frac{1}{5}\right)$.
One can calculate $2+\log\left(\frac{1}{5}\right)\frac{1}{5}+ \log\left(\frac{4}{15}\right)\frac{4}{5}\geq 0.01$, corresponding to the fact that a four-valued random variable where one value occurs with probability at most $\frac{1}{5}$ has entropy at most $1.99$.

Hence by our calculations and Observation~\ref{obs.sp} (to evaluate $S_p$), we have
\begin{align*}\frac{1}{n}\log p_s
&\geq -S+\pa-\int_{0}^\pa \frac{r^2}{(1-r)^2}dr + o(1) - \pa + \delb \pa^2 \cdot 0.01\\
&=-S-\int_{0}^\pa \frac{r^2}{(1-r)^2}dr+\delb \pa^2 \cdot 0.01+o(1).
\end{align*}
This gives an improvement over $\ppsz$ of $-\int_{0}^\pa \frac{r^2}{(1-r)^2}dr+\delb \pa^2\cdot 0.01+o(1)$. The first term corresponds to the savings $\ppsz$ would have, the second term corresponds to the savings we have in our modified guessing.
Observe that for small $\pa$, the integral evaluates to $\Theta(\pa^3)$, but the second term is $\Theta(\pa^2)$. Hence choosing $\pa$ small enough gives an improvement.
\end{proof}

\begin{algorithm}
\caption{$\sparsealg(${\ocnf} $F)$}
\begin{algorithmic}
\STATE repeat the following $2^{(S-\epsc)n}$ times:
\FOR {all subsets $W$ of size $\lfloor\delb n\rfloor$ and all assignments $\alpha'$ on $W$}
\STATE $F'\gets F^{[\alpha']}$
\WHILE {no satisfying assignment found}
\STATE try $\ppsz(F^{[\alpha']})$ 
\STATE $F'_2\gets \{C\in F'\mid |C|\leq 2\}$
\IF {$|F'_2|\leq \frac{1}{10} |\vbl(F')|$}
\STATE with probability $2^{-(S-0.015)|\vbl(F')|}$, run \ws($F'$)
\ENDIF
\STATE {}
\COMMENT {set all literals in a uniform $(\leq 2)$-clause to $1$}
\STATE $C'\getsuar F'_2$, if $F'_2=\{\}$ return failure
\FOR {$l\in C'$}
\STATE $F'\gets F'^{[l\mapsto 1]}$
\ENDFOR
\ENDWHILE
\ENDFOR
\end{algorithmic}
\end{algorithm}

\subsection{Sparse Case}
Now we show that if $\delb>0$ is small enough we get an improvement for a $\delb$-sparse {\cuocnf}. For this, we need the following theorem by Wahlstr\"om:
\begin{theorem}[\cite{wahlstroem05}]
\label{thm.ws}
Let $F$ be a CNF formula with average degree at most $4.2$ where we count degree $1$ as $2$ instead. Then satisfiability of $F$ can be decided in time $O(2^{0.371n})\leq 2^{(S-0.015+o(1))n}$. Denote this algorithm by $\ws(F)$.
\end{theorem}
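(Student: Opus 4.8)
Since Theorem~\ref{thm.ws} is quoted from~\cite{wahlstroem05}, I will only outline the branch-and-reduce argument by which a bound of this shape is obtained; in the paper it is then used as a black box for the sparse case. The natural progress measure is (a weighting of) the \emph{length} $L(F):=\sum_{C\in F}|C|$, the number of literal occurrences. The hypothesis that the average degree is at most $4.2$, counting degree $1$ as $2$, says in particular that $L(F)\leq 4.2\,n$, so it is enough to design an algorithm running in time $O(c^{\,\mu(F)})$ for a measure $\mu\leq L$ and a branching constant $c$ with $c^{4.2}\leq 2^{0.371}$, i.e.\ $c\approx 1.063$; the two claimed bounds then differ only by the arithmetic fact $S-0.015=2\ln 2-1-0.015>0.371$.

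The algorithm first saturates a set of polynomial-time \emph{reduction rules} that never increase $L$ (and strictly decrease it unless they are inapplicable): unit propagation, pure-literal elimination, subsumption, and — the essential one — resolution on any variable of degree $2$, that is, one occurring exactly once positively (in $C_1$) and once negatively (in $C_2$): replace $\{C_1,C_2\}$ by their resolvent $(C_1\cup C_2)\setminus\{x,\bar x\}$, which has length at most $|C_1|+|C_2|-2$, thereby removing a variable while decreasing $L$. After saturation every remaining variable occurs at least three times and in both polarities; hence either the formula is already empty (satisfiable) or contains the empty clause (unsatisfiable) — decided immediately — or we \emph{branch} on a variable $x$ of maximum degree $d\geq 3$, recursing on $F^{[x\mapsto 1]}$ and $F^{[x\mapsto 0]}$. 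Fixing $x$ deletes every clause containing the satisfied literal (removing its entire length from $L$) and shortens by one every clause containing the falsified literal, and typically triggers a further cascade of the reduction rules, so each branch sheds a substantial amount of measure.

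The heart of the proof — and what I expect to be the main obstacle — is then a case analysis of the local structure around the maximum-degree variable $x$ after saturation: one must enumerate how many clauses $x$ meets in each polarity, how long they are, and which reductions their removal or shortening forces in each branch, and read off for each configuration a recurrence $T(\mu)\leq T(\mu-a_1)+T(\mu-a_2)$. The worst such recurrence determines $c$, and to push $c$ down to the stated value one tunes the measure by the \emph{measure-and-conquer} method — giving reduced weight to occurrences in short clauses and/or to low-degree variables — so that $\mu(F)\leq 4.2\,n$ still holds on the input while the tightest branching recurrence closes at $c=2^{0.371/4.2}\approx 1.063$ per unit of measure. Everything else (correctness and termination of the reductions, the trivial base case, and the final numeric comparison with $2^{(S-0.015+o(1))n}$) is routine; the enumeration of branching configurations together with the accompanying weight optimisation is where essentially all of the work lies.
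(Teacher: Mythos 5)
The paper does not contain a proof of Theorem~\ref{thm.ws}: it is quoted as a black box from Wahlstr\"om's ESA~2005 paper, and the only thing the paper itself verifies is the closing arithmetic, namely that $0.371 < S - 0.015 = 2\ln 2 - 1.015 \approx 0.3713$, so that $O(2^{0.371n}) \leq 2^{(S-0.015+o(1))n}$. Thus there is no internal proof to compare your outline against. You correctly recognize that the statement is an import, you verify the same numeric inequality, and your sketch of a branch-and-reduce argument on a length-type measure $\mu \leq L(F) \leq 4.2n$, with degree-$2$ resolution and other length-nonincreasing reductions, branching on a maximum-degree variable, and a measure-and-conquer weighting tuned to close the worst recurrence at $c = 2^{0.371/4.2} \approx 1.063$, is a reasonable high-level account of how such a bound is obtained. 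That account cannot, however, be checked against this paper, and the actual case analysis and weight optimization are precisely the parts elided by both you and the paper; if you need the result for anything beyond a black box, you would have to consult~\cite{wahlstroem05} directly.
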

\begin{lemma}
\label{lem.fewok}
\sloppy{
Let $F$ be a $\delb$-sparse {\cuocnf}. For $\delb$ small enough, there exists an algorithm ($\sparsealg(F)$) running in expected time $2^{(S-\epsc+o(1))n}$ for $\epsc>0$ and finding the satisfying assignment $\alpha$ of $F$ with probability $2^{-o(n)}$.
}
%
\end{lemma}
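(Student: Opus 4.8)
The plan is to handle a $\delb$-sparse {\cuocnf} $F$ by first stripping out the ``bad'' variables via brute force, then separately dealing with the surviving $(\leq 2)$-clauses, and finally invoking Wahlstr\"om's algorithm on what remains. Concretely, $\delb$-sparsity gives a set $W$ with $|W|\leq \delb n$ such that $F\setminus W$ has maximum 3-clause degree $4$. I would enumerate all $\binom{n}{\lfloor\delb n\rfloor}$ candidate sets $W$ and all $2^{\lfloor\delb n\rfloor}$ assignments $\alpha'$ on $W$; by Lemma~\ref{lem.entropy} this costs only a $2^{(\delb+\H(\delb))n}$ factor, which is $2^{o(1)\cdot n}$-controllable once $\delb$ is small (and in any case below $\epsc$ for $\delb$ small enough). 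For the ``right'' pair $(W,\alpha')$ with $\alpha'$ consistent with $\alpha$, the restricted formula $F':=F^{[\alpha']}$ is still uniquely satisfiable (by $\alpha|_{V\setminus W}$) and has maximum 3-clause degree $4$ among its 3-clauses.

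The remaining obstacle is that $F'$ may still contain many $(\leq 2)$-clauses, and these can have unbounded degree, so the average-degree hypothesis of Theorem~\ref{thm.ws} (average degree at most $4.2$, counting degree $1$ as $2$) need not hold. This is exactly what the inner \textsc{while} loop addresses. If $F'$ has few $(\leq 2)$-clauses — say $|F'_2|\leq \frac{1}{10}|\vbl(F')|$ — then the total degree is dominated by 3-clauses of degree $\leq 4$ plus a small additive contribution from the $(\leq 2)$-clauses, so a short computation shows the average degree is at most $4.2$ and we may call $\ws(F')$, which runs in time $2^{(S-0.015+o(1))|\vbl(F')|}$; the extra probability factor $2^{-(S-0.015)|\vbl(F')|}$ in the algorithm is a ``stalling'' trick so that the expected work of this branch is $\poly$ per iteration while success is hit with probability $2^{-o(n)}$ overall. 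If instead $|F'_2|>\frac{1}{10}|\vbl(F')|$, we pick a uniformly random $(\leq 2)$-clause $C'$ and set every literal in it to $1$; since $C'$ has at most $2$ literals, the probability that this guess agrees with $\alpha$ on $\vbl(C')$ is at least $\frac14$, and doing so removes at least $\frac{1}{10}|\vbl(F')|$ clauses' worth of structure (killing $C'$ and shrinking others), so after $O(n)$ such steps we must reach the few-$(\leq 2)$-clauses regime or a satisfying assignment. Meanwhile $\ppsz(F^{[\alpha']})$ is also tried at every step, which is what makes the analysis of the ``many critical $(\leq 2)$-clauses'' sub-case go through via Lemma~\ref{lem.morethanone}/Corollary~\ref{cor.2cc}: each time we delete a $(\leq 2)$-clause, either PPSZ was already fast because of critical $(\leq 2)$-clauses, or we are making genuine progress toward the Wahlstr\"om regime.

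I would therefore structure the proof as a two-layer amortized argument. The outer $2^{(S-\epsc)n}$ repetitions supply the budget; inside, one fixes the correct $(W,\alpha')$ (incurring the $2^{(\delb+\H(\delb))n}$ enumeration factor, bounded by $2^{\epsc n/2}$ for $\delb$ small) and then argues that the correct sequence of clause-guesses in the \textsc{while} loop has probability at least $2^{-(\text{number of guesses})\cdot 2}\geq 2^{-o(n)}$ of being followed — here I must check that only $o(n)$ many $(\leq 2)$-clause guesses are ever needed, which follows because each guess either fixes $\Omega(1)$ variables or enough clauses disappear that the $\frac{1}{10}$-threshold is crossed within $O(1/\delb)$ further steps. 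Combined with the success probability of whichever terminal routine fires ($\ppsz$, which succeeds with probability $2^{(-S+o(1))n'}$ on $n'$ remaining variables, or $\ws$, which is deterministic but guarded by the $2^{-(S-0.015)n'}$ factor), the total success probability per outer iteration is $2^{(-S+0.015+o(1))n}$ in the worst branch, and multiplying by the $2^{(S-\epsc)n}$ repetitions gives constant — hence $2^{-o(n)}$ — success, provided $\epsc<0.015$ and $\delb$ is small; the running time is $2^{(S-\epsc+o(1))n}$ by construction. The main obstacle is the bookkeeping that the number of $(\leq 2)$-clause guesses in the \textsc{while} loop stays $o(n)$ and that the degree bound needed for Theorem~\ref{thm.ws} really does hold in the few-$(\leq 2)$-clauses regime after the restriction; everything else is routine entropy/repetition counting.
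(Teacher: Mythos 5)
Your overall architecture (brute-force $W,\alpha'$ enumeration, the Wahlstr\"om call with the $2^{-(S-0.015)n'}$ stalling trick, and an inner loop that peels off $(\leq 2)$-clauses until the average-degree hypothesis applies) matches the paper, but the quantitative heart of the argument is wrong, and the claim you flag as ``the main obstacle'' is in fact false.

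First, the success probability of a single $(\leq 2)$-clause guess. You argue that fixing both literals of a random $(\leq 2)$-clause $C'$ to $1$ agrees with $\alpha$ with probability at least $\frac14$. That would be a disaster: it is $2^{-1}$ per variable, strictly worse than the $2^{-S}\approx 0.765$ per variable that PPSZ already achieves, so the amortization cannot possibly beat PPSZ. The paper's actual bound is $\frac23$, and it comes from an observation you never use: since $F$ is a 1C-Unique $(\leq 3)$-CNF, a \emph{non-critical} $(\leq 2)$-clause must be a $2$-clause with \emph{both} literals satisfied (it cannot have exactly one, else it would be critical for that variable, and it cannot be a unit clause unless it is critical). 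Thus setting all its literals to $1$ is correct with probability $1$ for non-critical $(\leq 2)$-clauses. The role of the first PPSZ call, via Lemma~\ref{lem.morethanone}/Corollary~\ref{cor.2cc}, is precisely to handle the case of $\geq\frac{1}{30}n'$ critical $(\leq 2)$-clauses; if PPSZ failed to improve and yet $|F'_2|>\frac{1}{10}n'$, then at most one third of $F'_2$ is critical, so a uniformly random $(\leq 2)$-clause is non-critical (hence safe to set) with probability at least $\frac23 > 2^{-(S-0.015)\cdot 2}$. Without this ``few-critical-$(\leq 2)$-clauses'' deduction, the per-guess probability cannot be pushed above $\frac12$ and the proof collapses.

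Second, you claim the number of $(\leq 2)$-clause guesses stays $o(n)$. That is not true and is not needed. Each guess removes only two variables, so the while loop may run $\Theta(n)$ times before the $\frac{1}{10}$ threshold is reached. What makes the argument work is not that the loop is short, but that each of its $\Theta(n)$ iterations is \emph{cheaper per variable} than the target rate $2^{-(S-0.015)}$: the quantity $\frac{2}{3}$ per two variables beats $2^{-(S-0.015)\cdot 2}$, so the product over all iterations, times the terminal Wahlstr\"om/PPSZ success probability, is still $2^{(-S+\epsc-o(1))n}$ for a constant $\epsc$ independent of $\delb$. Your attempted bookkeeping (``only $o(n)$ guesses, each $\Omega(1)$ variables or threshold crossed in $O(1/\delb)$ steps'') has no basis and, if it were the crux, the proof would not go through.
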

\begin{proof}
Similar to Section~\ref{sec.1cc}, we first check by brute-force all subsets $W$ of $\lfloor\delb n\rfloor$ variables and all possible assignments $\alpha'$ of $W$; by definition of $\delb$-sparse for some $W$, the part of $F$ independent of $W$ (i.e.\ $F\setminus W$) has maximum 3-clause degree $4$. If furthermore $\alpha'$ is compatible with $\alpha$, $F':=F^{[\alpha']}$ is a {\cuocnf} with maximum 3-clause degree $4$: We observed that critical clauses cannot appear in the process of assigning variables according to $\alpha$; furthermore any clause of $F$ not independent of $W$ must either disappear in $F'$ or become a $(\leq 2)$-clause.
As earlier, there are at most $2^{(\delb+H(\delb))n}$ cases of choosing $W$ and $\alpha'$. We now analyze what happens for the correct choice of $F'$:


We would like to use $\ws$ on $F'$; however $F'$ might contain an arbitrary amount of $(\leq 2)$-clauses. The plan is to use the fact that either there are many critical $(\leq 2)$-clauses, in which case PPSZ is better, or few critical $(\leq 2)$-clauses, in which case all other $(\leq 2)$-clauses are non-critical and have only satisfied literals.

The algorithm works as follows: We have a {\cuocnf} on $F'$ on $n':=|\vbl(F')|$ variables; the maximum degree in the 3-clauses is at most 4. First we try PPSZ: if there are $\frac{1}{30} n'$ critical $(\leq 2)$-clauses, this gives a satisfying assignment with probability  $2^{(S-0.035\frac{1}{30})n'}$. Otherwise, if there are less than $\frac{1}{10}n'$ $(\leq 2)$-clauses, the criterion of Theorem~\ref{thm.ws} applies: We invoke $\ws(F')$ with probability $2^{-(S-0.015)n'}$; this runs in expected time $2^{-o(n)}$ and finds a satisfying assignment with probability $2^{-(S-0.015)n'}$.

If both approaches fail, we know that $F'$ has at less than $\frac{1}{30} n'$ critical $(\leq 2)$-clauses clauses, but also more than $\frac{1}{10} n'$ $(\leq 2)$-clauses overall. Hence at most one third of the $(\leq 2)$-clauses is critical. However a non-critical $(\leq 2)$-clause must be a $2$-clause with both literals
 satisfied. Hence choosing a $(\leq 2)$-clause of $F'$ uniformly at random and setting all its literals to $1$ sets two variables correctly with probability at least $\frac{2}{3}>2^{-0.371\cdot 2}>2^{-(S-0.015)\cdot 2}$. That is we reduce the number of variables by 2 with a better probability than PPSZ overall; and we can
  repeat the process with the reduced formula. This shows that for the correct $F'$, we have expected running time $2^{o(n)}$ and success probability $2^{(-S+\epsc-o(1))n}$ 
  for $\epsc>0$. It is important to see that $\epsc$ does \emph{not} depend on $\delb$. Repeating this process $2^{(-S+\epsc-o(1))n}$ times gives success probability $2^{o(n)}$.
  
  Together with the brute-froce choice of $W$ and $\alpha'$, we have expected running time of $2^{(S-\epsc+\delb+H(\delb)+o(1))n}$. By choosing $\delb$ small enough we are better than PPSZ.

\end{proof}

\section{Open Problems}
\label{sec.con}

Can we also obtain an improvement for general $3$-SAT? In general $3$-SAT, there might be even fewer critical clauses and critical clauses for some assignments are not always critical for others. We need to fit our improvement into the framework of~\cite{hertli11}. As there is some leeway for multiple assignments, this seems possible, but nontrivial and likley to become very complex.

Another question is whether we can improve (Unique) $k$-SAT. PPSZ becomes slower as $k$ increases, which makes an improvement easier. However the guessing in $\sparsealg$ relied on the fact that non-critical $(\leq 2)$-clauses have all literals satisfied, which is not true for larger clauses.

Suppose Wahlstr\"om's algorithm is improved so that it runs in time $O(c^n)$ on 3-CNF formulas with average degree $D$. The sparsification lemma~\cite{ipz01} shows that for $c\to 1$ and $D\to\infty$, we obtain an algorithm for 3-SAT running in time $O(b^n)$ for $b\to 1$. Can our approach be extended to a similar sparsification result?

\paragraph{\bf{Acknowledgements}}
I am very grateful to my advisor Emo Welzl for his support in realizing this paper.


\bibliography{../common}

\newpage

\begin{appendix}
\section{Omitted Proofs}
\begin{theorem}
\label{thm.ppszbound}
Let $F$ be a {\ocnf} on $n$ variables with unique satisfying assignment $\alpha^*$. Let $x\in \vbl(F)$ and $r\in[0,1]$. In the PPSZ algorithm, conditioned on $\beta=\alpha^*$ and $\pi(x)=r$, $x$ is \emph{forced} with probability at least $b(r)$, where $b(r)=b_n(r)=\min\left\{1,\frac{r^2}{(1-r)^2}\right\}-o(1)$ is a monotone increasing function ($o(1)$ stands for a term that goes to $0$ when $n$ goes to $\infty$).
\end{theorem}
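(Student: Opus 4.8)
The plan is to carry out the critical-clause-tree argument of \cite{ppsz}, arranged so that the truncation error ends up inside the $o(1)$. Fix $F$, its unique satisfying assignment $\alpha^*$, a variable $x$, a place $r=\pi(x)$, and set the depth parameter $d=d(n):=\lfloor\log\log n\rfloor$, so that $2^{d(n)}\le\log n$ while $d(n)\to\infty$. The conditioning on $\beta=\alpha^*$ is used only to keep the book-keeping clean: under it every variable processed before $x$ (forced or guessed) is set to its $\alpha^*$-value, so at the moment $x$ is handled the current formula is exactly $F$ restricted by $\{z\mapsto\alpha^*(z)\colon \pi(z)<r\}$, and $x$ is forced iff this restricted formula $(\log n)$-implies the literal of $x$ satisfied by $\alpha^*$.

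For the deterministic core, build a labelled tree $T_x$ rooted at a node carrying $x$: a node carrying a variable $y$ is expanded by picking a critical clause $C$ for $y$ (one exists for every variable since $\alpha^*$ is the unique satisfying assignment) and attaching as children the nodes carrying the other variables of $C$ (at most two), subject to the rule that variables along any root-to-leaf path stay distinct; a node becomes a leaf once it reaches depth $d$ or can no longer be expanded under this rule. Call a node \emph{happy} if its variable has placement $<r$. The claim is that if the happy nodes form a cut of $T_x$ --- every root-to-leaf path meets one --- then $x$ is forced: letting $A$ be the set of nodes with no happy proper ancestor, the clauses attached to the internal nodes of $A$ survive the restriction, there are at most $2^d-1\le\log n$ of them, and processing them from the leaves of $A$ upward, resolving each against the partial assignment fixed by its happy children and the already-derived unit clauses of its non-happy children, yields the unit clause asserting $x=\alpha^*(x)$. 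Thus a subformula of size at most $\log n$ of the restricted $F$ implies $x$, which is what the algorithm detects.

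For the probabilistic part, conditioned on $\pi(x)=r$ the variables carried by the non-root nodes along any fixed path are distinct and hence have independent uniform placements; comparing $T_x$ with the full binary tree in which the root always has two children and every further child continues independently with probability $1-r$, one obtains that the probability of a happy cut is at least $q_d(r)$, the probability that this Galton--Watson process dies out by generation $d$. A standard fixed-point computation --- the extinction probability $\eta$ solves $\eta=(r+(1-r)\eta)^2$ --- gives $\lim_{d\to\infty}q_d(r)=\min\{1,\tfrac{r^2}{(1-r)^2}\}$, while $q_d(r)$ increases in $d$ and, by coupling all nodes through a single uniform variable each, increases in $r$ for every fixed $d$. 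Setting $b_n(r):=q_{d(n)}(r)$ then yields a function that is monotone increasing in $r$, satisfies $b_n(r)=\min\{1,\tfrac{r^2}{(1-r)^2}\}-o(1)$, and lower-bounds the forcing probability; this is the asserted statement.

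I expect the main obstacle to be the combinatorics of $T_x$ when a variable is forced to repeat: a variable's critical clause may reuse variables already on the current path, so the tree either terminates prematurely at that node or, across different branches, the same variable is carried by several nodes, and the happy-cut event is then not literally the Galton--Watson extinction event. The comparison above therefore needs care --- repetitions across branches only help (one happy repeated variable cuts off several paths at once), shorter critical clauses only help (fewer children), within-path repetitions are ruled out by the distinctness convention, and the genuinely delicate point is controlling the premature leaves, which the happy nodes must cover from above. The remaining ingredients --- the fixed-point analysis of $q_d(r)$, its convergence as $d=d(n)\to\infty$, the two monotonicities, and checking that the resolution argument degrades gracefully for short and repeated clauses --- are routine.
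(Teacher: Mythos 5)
Your proposal reconstructs the critical-clause-tree argument of \cite{ppsz} from scratch, conditioned on $\pi(x)=r$, and lower-bounds the forcing probability by the depth-$d(n)$ Galton--Watson extinction probability $q_{d(n)}(r)$. That is a legitimate route, but it is not the route the paper takes. The paper's proof invokes \cite{ppsz} as a black box: it takes as given that \cite{ppsz} already supplies \emph{some} pointwise lower bound $b'(r)\le f(r)$ on the forcing probability given $\pi(x)=r$, whose integral is $1-S-o(1)$; it forces monotonicity by setting $b(r):=\max_{\rho\le r}b'(\rho)$; and it then argues by contradiction that $b(r)\ge f(r)-o(1)$ for every fixed $r$, since a gap bounded away from $0$ at some $r$ would, by continuity of $f$ and monotonicity of $b$, produce a constant deficit in $\int_0^1\bigl(f(t)-b(t)\bigr)dt$, contradicting the integral identity. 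Your approach buys self-containment and gives the monotonicity for free via the coupling through a single uniform per variable; the paper's approach buys brevity and, more importantly, sidesteps exactly the ``delicate points'' you flag at the end --- the premature leaves and the cross-branch variable repetitions in $T_x$. Those are genuine technical steps in \cite{ppsz} that you acknowledge but do not resolve: written out in full, you would need to reprove the correlation argument that variable reuse across branches and shortened critical clauses can only increase the happy-cut probability, and you would need to show that nodes dying before depth $d$ for lack of fresh path-disjoint variables do not hurt. The paper deliberately avoids re-doing that work by extracting only the integral consequence from \cite{ppsz} and upgrading it to a monotone pointwise bound after the fact.
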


\begin{proof}[Proof of Theorem \ref{thm.ppszbound}]
This can be derived from~\cite{ppsz}. There are two differences: The first is that we define $\ppsz$ with $s$-implication instead of bounded resolution. It is easily seen that the critical clause tree construction of~\cite{ppsz} also works with $s$-implication. We use $s$-implication because we think it makes the algorithm easier to understand.

The second difference it that we give a bound for a fixed $\pi(x)$. We need this to be able to modify $\ppsz$ and analyze it in special situations. However, we can derive this result from~\cite{ppsz}: 

Let $f(r):=\min\left\{1,\frac{r^2}{(1-r)^2}\right\}$, the ``ideal'' lower bound of PPSZ that a variable is forced. Remember that $\int_0^1 f(r) = 1-S$.
In~\cite{ppsz} tehy give a lower bound $0\leq b'(r)\leq 1$ on the probability that a variable is forced given $\pi(x)$ with $b'(r)\leq f(r)$. This bound is shown to integrate to $1-S-o(1)$. As the probability that a variable is forced does not decrease if it comes later in PPSZ, the bound can easily been made monotone (if it is not already) by setting $b(r):=\max_{\rho\leq r} b'(r)$, 

For $r=0$ the statement is trivial. Now suppose for some $r\in(0,1]$, $b(r)<f(r)-\epsilon$ for some $\epsilon>0$ and all $n$. By the above observation, $f(r)-b(r)\geq 0$, and $\int_{0}^{1} (f(r)-b(r))dr=o(1)$. By continuity of $f(r)$ and monotonicity of $b(r)$, we find $r'<r$ such that $b(r)<f(r')-\epsilon$ for all $n$. But then by monotonicity of $f(r)$ and $b(r)$, $\int_0^1 (f(t)-b(t))dt \geq \int_{r'}^r (f(t)-b(t))dr \geq \int_{r'}^r(f(r')-b(r))dr > (r'-r)\epsilon$, a contradiction.
\end{proof}

To go from Theorem~\ref{thm.ppszbound}, where the place of a variable is fixed to the expectation, we need to integrate (this complicated approach gives us some flexibility later). We need the following special case of the well-known dominated convergence theorem (e.g. see~\cite{bartle11}). It essentially states that the $o(1)$ integrates to an $o(1)$ in our case.
\begin{theorem}[Dominated Convergence Theorem]
\label{thm.dc}
Let $f:[a,b]\to\R$ be a continuous function with $\int_{a}^{b} f(x)dx = t$. Let $f_n(x)=f(x)-o(1)$ be integrable with $|f_n(x)|\leq 1$. Then $\int_{a}^{b} f_n(x)dx=t-o(1)$.
\end{theorem}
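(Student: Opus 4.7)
The statement is the classical Dominated Convergence Theorem specialized to a bounded interval with a constant majorant, so the plan is simply to reduce it to the standard measure-theoretic DCT (the one cited as \cite{bartle11}). First I would unpack the $o(1)$ notation: the hypothesis $f_n(x)=f(x)-o(1)$ means there is a sequence $g_n(x):=f(x)-f_n(x)$ with $g_n(x)\to 0$ as $n\to\infty$ for every $x\in[a,b]$. In other words, $f_n\to f$ pointwise on $[a,b]$.

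Next I would produce a single integrable dominating function. Since $f$ is continuous on the compact interval $[a,b]$, it attains its maximum, so $M:=\sup_{x\in[a,b]}|f(x)|$ is finite. Together with the assumption $|f_n(x)|\leq 1$, this gives $|f_n(x)-f(x)|\leq M+1$ for every $x$ and every $n$. The constant $M+1$ is integrable on the finite interval $[a,b]$, so it serves as the required majorant.

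With pointwise convergence and a common integrable bound in hand, the classical Dominated Convergence Theorem applied to the sequence $f_n-f$ yields
\[
\int_a^b |f_n(x)-f(x)|\,dx \;\longrightarrow\; 0,
\]
and hence $\int_a^b f_n(x)\,dx\to\int_a^b f(x)\,dx = t$. Rewriting the sequential statement in asymptotic notation gives exactly $\int_a^b f_n(x)\,dx = t - o(1)$, as claimed.

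There is no real obstacle here; the only subtlety is bookkeeping, namely translating the $o(1)$ hypothesis into the sequential framework in which DCT is stated, and noting that the dominating function can be taken to be a constant. In fact, because $[a,b]$ is a compact interval and the bound is uniform, the weaker Bounded Convergence Theorem (or a direct Fatou-type argument) would already suffice, so the appeal to the full strength of DCT is only for convenience and familiarity.
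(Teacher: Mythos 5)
Your proposal is correct. The paper gives no proof of this statement---it is presented as a special case of the classical dominated convergence theorem with a citation to a textbook---and your argument (pointwise convergence from the $o(1)$ hypothesis, a constant dominating function from $|f_n|\le 1$ and boundedness of $f$ on the compact interval, then the classical DCT or the bounded convergence theorem) is exactly the standard reduction that citation stands for.
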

Combining Theorem~\ref{thm.ppszbound} with Lemma~\ref{lem.ppsz} and the dominated convergence theorem~\ref{thm.dc} gives us the following corollary. Integrability of $f_n$ follows from monotonicity.
\begin{corollary}
\label{cor.expguessed}
Let $F$ a {\ocnf} with unique satisfying assignment $\alpha$. Then in PPSZ($F$) conditioned on $\beta=\alpha$, the expected number of guessed variables is at most $(S+o(1))n$.

Furthermore, suppose we pick every variable of $F$ with probability $p$, independently, and let $V_p$ be the resulting set. Then in PPSZ($F$) conditioned on $\beta=\alpha$, the expected number of guessed variables is at most $(S_p+o(1))n$.
\end{corollary}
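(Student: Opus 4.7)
My plan is to combine Theorem~\ref{thm.ppszbound}, which bounds the forcing probability of a single variable conditional on its place, with linearity of expectation and the dominated convergence theorem (Theorem~\ref{thm.dc}) to convert the pointwise $o(1)$ error in $b_n(r)$ into an integrated $o(1)$.

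For the first assertion, fix $x \in V$. Since $\pi(x)$ is uniform on $[0,1]$ and independent of $\beta$, Theorem~\ref{thm.ppszbound} gives
\[\Pr[x \text{ guessed} \mid \beta = \alpha] \leq \int_0^1 (1 - b_n(r))\, dr.\]
The limiting integrand $f(r) := 1 - \min\{1, r^2/(1-r)^2\}$ is continuous with $\int_0^1 f(r)\, dr = 1-(1-S) = S$, and $0 \leq 1 - b_n(r) \leq 1$ with $1 - b_n(r) = f(r) + o(1)$ pointwise; monotonicity of $b_n$ asserted in Theorem~\ref{thm.ppszbound} ensures integrability. Theorem~\ref{thm.dc} therefore yields that the above integral equals $S + o(1)$. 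Summing over the $n$ variables by linearity of expectation gives the first claim.

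For the second assertion, the idea is a coupling: identify the random set $V_p$ (each variable picked independently with probability $p$) with $\{x \in V : \pi(x) \leq p\}$ for the $\pi$ drawn inside $\ppsz(F)$; this is valid since $\Pr[\pi(x) \leq p] = p$ independently across variables. Under this coupling, in $\ppsz(F)$ conditioned on $\beta = \alpha$ the variables of $V_p$ are processed first and, by correctness of the forcing rule, all set to their values under $\alpha$; immediately afterwards the working formula is exactly $F^{[\alpha|_{V_p}]}$. Because the $\ppsz$ forcing check depends on $\pi$ only through the induced order of already-processed variables and not on the numerical place values, the tail of the execution on $V \setminus V_p$ has, given $V_p$, the same distribution as an independent run of $\ppsz(F^{[\alpha|_{V_p}]})$ conditioned on $\beta = \alpha|_{V\setminus V_p}$. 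Hence it suffices to bound the expected number of variables in $V \setminus V_p$ that are guessed in $\ppsz(F)$.

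For that bound, for each $x \in V$,
\[\Pr[\pi(x) > p \text{ and } x \text{ guessed} \mid \beta = \alpha] \leq \int_p^1 (1 - b_n(r))\, dr,\]
and Theorem~\ref{thm.dc} applied to $f$ on $[p,1]$ evaluates this to $S_p + o(1)$. Linearity of expectation then yields the claimed $(S_p + o(1))n$ bound. The only delicate step is the coupling in the second part; once one observes that $\ppsz$'s control flow is a function only of the ordering of variables induced by $\pi$, the rest is routine integration.
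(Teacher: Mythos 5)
Your proof is correct and follows the same route the paper sketches: integrate the pointwise bound from Theorem~\ref{thm.ppszbound} via the dominated convergence theorem and apply linearity of expectation. The only addition is that you make explicit the coupling between $V_p$ and $\{x : \pi(x)\le p\}$ for the second assertion, which the paper leaves implicit; that coupling is exactly what is needed and is argued correctly.
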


\subsection{Integrals of Lemma \ref{lem.morethanone}}
\label{subs.integrals}
From Theorem~\ref{thm.ppszbound}, we know that the probability that $x$ is forced is at least $\min\left\{\frac{r^2}{(1-r)^2},1\right\}-o(1)$. Hence the overall probability that $x$ is forced is at least \[\int_{0}^1 \max\left\{2r^2-r^3,\min\left\{\frac{r^2}{(1-r)^2},1\right\}-o(1)\right\}dr\geq 0.6152-o(1).\]
The $o(1)$ integrates to $o(1)$ due to the dominated convergence theorem~(Theorem \ref{thm.dc}).
If we have a critical $(\leq 2)$-clause, the overall probability that $x$ is forced is at least \[\int_{0}^1 \max\left\{r,\min\left\{\frac{r^2}{(1-r)^2},1\right\}-o(1)\right\}dr\geq 0.649-o(1).\]

\end{appendix}

\end{document}